\useunder{\uline}{\ul}{}
\newcommand{\cmark}{\ding{51}}%
\newcommand{\xmark}{\ding{55}}%
\DeclareMathOperator*{\argmax}{arg\,max}
  \providecommand\BibTeX{{%
    \normalfont B\kern-0.5em{\scshape i\kern-0.25em b}\kern-0.8em\TeX}}}
\begin{document}

\title{Ethereum Proof-of-Stake and the Probabilistic Bouncing Attack}



\author{Ulysse Pavloff}
\orcid{0000-0003-4125-3306}
\affiliation{%
  \institution{Université Paris-Saclay, CEA, List}
  \city{Palaiseau} 
  \country{France} 
}

\author{Yackolley Amoussou-Guenou}
\orcid{0000-0002-6942-0412}
\affiliation{%
  \institution{Université Paris-Saclay, CEA, List}
  \city{Palaiseau} 
  \country{France} 
}

\author{Sara Tucci-Piergiovanni}
\orcid{0000-0001-9738-9021}
\affiliation{%
  \institution{Université Paris-Saclay, CEA, List}
  \city{Palaiseau} 
  \country{France} 
}

\renewcommand{\shortauthors}{Pavloff, Amoussou-Guenou, and Tucci-Piergiovanni.}

\begin{abstract}
Ethereum has undergone a recent change called \textit{the Merge}, which made Ethereum a Proof-of-Stake blockchain shifting closer to BFT consensus. Ethereum, which wished to keep the best of the two protocol designs (BFT and Nakomoto-style), now has a convoluted consensus protocol as its core. The result is a blockchain being possibly produced in a tree-like form while participants try to finalize blocks. We categorize different attacks jeopardizing the liveness of the protocol. The Ethereum community has responded by creating patches against some of them. We discovered a new attack on the patched protocol. To support our analysis, we propose a new high-level formalization of the properties of liveness and availability of the Ethereum blockchain, and we provide a pseudo-code. We believe this formalization to be helpful for other analyses as well.
Our results yield that the Ethereum Proof-of-Stake has safety but only probabilistic liveness. The probability of the liveness is influenced by the parameter describing the time frame allowed for validators to change their mind about the current main chain. 
\end{abstract}

\begin{CCSXML}
<ccs2012>
   <concept>
       <concept_id>10003752.10003809.10010172</concept_id>
       <concept_desc>Theory of computation~Distributed algorithms</concept_desc>
       <concept_significance>500</concept_significance>
       </concept>
   <concept>
       <concept_id>10010520.10010575</concept_id>
       <concept_desc>Computer systems organization~Dependable and fault-tolerant systems and networks</concept_desc>
       <concept_significance>500</concept_significance>
       </concept>
 </ccs2012>
\end{CCSXML}

\ccsdesc[500]{Theory of computation~Distributed algorithms}
\ccsdesc[500]{Computer systems organization~Dependable and fault-tolerant systems and networks}

\keywords{Ethereum Proof-of-Stake, Liveness, Availability, Bouncing attack}

\received{20 February 2007}
\received[revised]{12 March 2009}
\received[accepted]{5 June 2009}

\maketitle

\section{Introduction}
\label{sec:Introduction}
Ethereum has recently undergone a major change in its protocol, successfully passing from proof-of-work to proof-of-stake. The change underpins an entirely new consensus protocol, which brings Byzantine fault-tolerance to Ethereum. The main design goal is to keep using a Nakamoto-style consensus, i.e., a protocol that constantly creates blocks in a tree-like form and selects a branch as the current chain using a fork-chain rule. 
However, a mechanism (called finality gadget) incrementally finalizes blocks in the chain as opposed to pure Nakamoto-style consensus. A \textit{finalized block} is a block that is voted by at least two-thirds of validators\footnote{To become a validator, one needs to ``stake'' an amount of 32 ETH (the native cryptocurrency of the blockchain).}. In a system with  less than one-third of Byzantine validators, a finalized block is never revoked.  

Interestingly, this design aims at guaranteeing, at the same time, the availability of the chain (to let new blocks be continued added) and consistency, i.e.,  uniqueness of a finalized chain's prefix. Note that classical BFT consensus protocols re-adapted to blockchains such as Tendermint \cite{buchman_latest_2018}  for the Cosmos blockchain \cite{kwon_cosmos_2016} or Tenderbake \cite{astefanoaei_tenderbake_2021} for Tezos blockchain \cite{goodman_tezos_2014}, finalize one block at the time: for each height of the blockchain only one block is ever added. On the other hand, Ethereum Proof-of-Stake (PoS) builds a common prefix, but the suffix can change: for a given height of the blockchain, different blocks can be seen at that height over time. The advantage of this approach is to always make progress, regardless of Byzantine behavior and network partitions, while classical  BFT consensus protocols stop producing blocks during asynchronous periods and attacks. 
Ethereum PoS tries then to provide consistency without renouncing availability. This seems to be in striking contrast with the CAP theorem \cite{gilbert_brewer_2002}, which states that it is impossible to guarantee progress and consistency in the case of network partitions. 
The caveat here is that Ethereum PoS maintains a data structure where the prefix is consistent and finalized only when possible, while the suffix can grow without being consistent. The resulting protocol, however, is quite involved. For this reason, we aim to provide a formal ground for analysis of the Ethereum PoS protocol. 

We first provide a novel formalization of the Ethereum PoS blockchain properties, which are a combination of properties of Nakamoto-style blockchains and BFT-style ones. We then provide a high-level formalization of the protocol itself through pseudo-code. Formalization allows for analysis of the protocol in terms of its properties. 

In this paper, we study the security and the liveness of the protocol. Liveness is the ability to always finalize new blocks and security is the impossibility of having two checkpoints finalized on different chains. Our analysis reveals a new possible attack on the protocol's liveness. 
Indeed previous work has already pointed out the risks of a so-called \textit{bouncing attack} on liveness \cite{nakamura_analysis_2019}. A bouncing attack is an attack that prevents the chain from being finalized because the main chain selected through the fork choice rule continually bounces between two alternative branches. After the attack was identified, the Ethereum community responded by implementing a patch to the protocol to prevent this attack. The patch aims at mitigating bouncing by forcing validators to stick to a chain after a while. Interestingly, we managed to find a new bouncing attack on the patched version of the protocol. We found that the bouncing can be repeated over time but with decreasing probability of success. This shows that the liveness of the patched protocol is probabilistic. Note that the attack is plausible in a Byzantine environment since it only relies on the Byzantine validator's capacity to withhold votes and to release them at the right time to make honest validators change their mind on the chosen chain.

The paper is organised as follows: Section \ref{sec:systemModel} elaborates on the system model while Section \ref{sec:blockchainProperties} defines the properties essential to our formalization. In Section \ref{sec:ethereumConsensusProtocol}, we explain and formalize the Ethereum PoS protocol providing all the materials necessary for its understanding, including pseudo-code. Section \ref{sec:livenessAttack} presents a new liveness attack still possible in the current version of the Ethereum PoS protocol. Section \ref{sec:safety} proves the safety of Ethereum PoS. We present the related works in Section \ref{sec:relatedWorks}, and categorize the different types of attacks in Section \ref{sec:categorizationAttacks}. We conclude in Section \ref{sec:conclusion}.

\section{System model}
\label{sec:systemModel}

We consider a system composed of a finite set $\Pi$ of processes called \emph{validators}\footnote{At the implementation level, validators are the processes with ETH staked that allow them to vote as part of the consensus protocol.}. There are a total of $n$ validators.
Each validator has an associated public/private key pair for signing and can be identified by its public key.
We assume that digital signatures cannot be forged. 
Validators have synchronized clocks\footnote{Clocks can be offset by at most $\tau$,
this way, the offset can be captured as part of the network delay.}. 
Time is measured by periods of 12 seconds called \emph{slot}s, a period of 32 slots is called an \emph{epoch}.


\paragraph*{Network} Processes communicate by message passing.
We assume the existence of an underlying broadcast primitive, which is a best effort broadcast. This means that when a correct process broadcasts a value, all the correct processes eventually deliver it. Messages are created with a digital signature.

We assume a \emph{partially synchronous model} \cite{dwork_consensus_1988}, where after some unknown Global Stabilization Time (\texttt{GST}), the system becomes synchronous, and there is a finite known bound $\Delta$ on the message transfer delay. 
Note that even if we have synchronized clocks, having an asynchronous network before \texttt{GST} still makes the system partially synchronous.

\paragraph*{Fault Model}
Validators can be \emph{correct} or \emph{Byzantine}. Correct validators (also called honest validators) follow the protocol,
while Byzantines ones may arbitrarily deviate from the protocol\footnote{Since in this paper we are only interested in the consensus part of the protocol, we only characterize validator's behavior. For clients submitting transactions, as in any blockchain, we assume they can be Byzantine.}.
We denote by $f$ the number of Byzantine validators, with $f<n/3$. 

\section{Blockchain properties}
\label{sec:blockchainProperties}

In our analysis, we will continuously use the term Ethereum Proof-of-Stake as \cite{schwarz_three_2021} to name the new protocol of Ethereum\footnote{Other appellation such as Ethereum 2.0 or Consensus Layer can be found, we have chosen to stick with Ethereum Proof-of-Stake.}. To begin our analysis of Ethereum Proof-of-Stake, we start by defining the terms and properties we will investigate.

Similarly to \cite{anceaume_abstract_2018}, we formalize the blockchain data structure 
as a \emph{BlockTree}. 
Indeed the blockchain takes the form of a tree in which every node is a block pointing to its unique parent, and the tree's root is the \emph{genesis block}. 
Among the different branches of the BlockTree, the protocol indicates a unique branch, or chain, to build upon with a so-called fork choice rule (e.g., the longest chain rule in Bitcoin).
The selected chain is called the \emph{candidate chain}.



\begin{definition}[\textbf{Candidate chain}]
 We call \textbf{candidate chain} the chain designated as the one to build upon by the fork choice rule.
 Considering the view of the chain of an honest validator $i$, $i$'s associated candidate chain is noted $C_i$. 
\end{definition}

The blocks in the candidate chain can be finalized or not.

\begin{definition}[\textbf{Finalized block}]
A block is finalized for a validator $i$ if and only if  the block cannot be revoked, i.e., it permanently belongs to the candidate chain $C_i$ .
\end{definition}

\emph{Note:} It stems from the definition that all the predecessors of a finalized block are finalized.

\begin{definition}[\textbf{Finalized chain}]
The finalized chain is the chain constituted of all the finalized blocks.
\end{definition}

\emph{Note:} The finalized chain $C_{fi}$ is always a prefix of any candidate chain $C_i$.\\



To analyse the protocol, one needs to examine the capability of the Ethereum Proof-of-Stake protocol to construct a consistent blockchain (safety), to allow validators to add blocks despite network partitions and failures (availability), and to make progress on the finalization of new blocks (liveness). These are paramount properties characterizing blockchains.
Safety, availability, and liveness are expressed as follows:
\begin{definition}[\textbf{Safety}]
A blockchain is consistent or \textbf{safe}
if for any two correct validators with a finalized chain, then one chain  is necessarily the prefix of the other. More formally, for two validators $i$ and $j$ with respective finalized chain $C_{fi}$ and $C_{fj}$, then $C_{fi}$'s is the prefix of $C_{fj}$ or vice-versa.
\end{definition}


\begin{definition}[\textbf{Availability}]
A blockchain is \textbf{available} if the following two conditions hold: (1) any correct validator is able to append a block to its candidate chain in bounded time, regardless of the failures of other validators and the network partitions;  (2) the candidate chains of all correct validators are eventually growing, i.e., given a block $b_k$ added to a candidate chain at a distance $d$ from the genesis block $b_0$, where the distance is the number of blocks separating $b_k$ from $b_0$, then eventually a block $b_l$ will be added to the candidate chain at a distance $d'>d$. 
\end{definition}

\begin{definition}[\textbf{Liveness}]
A blockchain is \textbf{live} if the finalized chain is ever growing.
\end{definition}

The fundamental difference between the finalized and the candidate chain lies in the fact that blocks of the finalized chain can never be revoked, while the candidate chain can change from one branch to another in the tree so that a suffix of blocks of the previously selected branch might be revoked. 
 Availability, on the other hand, guarantees that adding blocks to the candidate chain is a wait-free operation whose time to complete does not depend on network failures or Byzantine behaviors. Availability also implies that blocks are constantly added in such as way that the height of the candidate chain eventually grows. This property avoids the pathological scenario in which all the blocks are added to the genesis block to form a star.  



As in any distributed system, blockchains are faced with the dilemma brought by the CAP-Theorem \cite{gilbert_brewer_2002}. This theorem states that no distributed system can satisfy these three properties at the same time:  \emph{consistency}, \emph{availability}, and \emph{partition tolerance}. Indeed, if network partitions occur, either the system remains available at the expense of consistency, or stops making progress until the network partition is resolved to guarantee consistency. 
This means that no blockchain can  simultaneously be available and consistent. However, by maintaining the candidate and the finalized chain simultaneously, Ethereum Proof-of-stake aims to offer both safety and availability. The candidate chain aims to be available but without guaranteeing consistency all the time, while the finalized chain falls on the other side of the spectrum, guaranteeing consistency without availability. Therefore, the finalized chain will finalize  blocks  only when it is safe to do so where  the candidate chain will still be available during network partitions (caused by network failures or attacks).   The only caveat here is that the finalized chain grows by finalizing blocks of the candidate chain, which means that the properties of the two chains are interdependent. In particular, to assure liveness, it is necessary that the candidate chain steadily grows. This interdependence is a source of  vulnerability that must be thoroughly analysed. 



\section{Ethereum Proof-of-Stake Protocol}
\label{sec:ethereumConsensusProtocol}

\subsection{Overview}
\label{subsec:overview}
The Ethereum Proof-of-Stake (PoS) protocol design is quite involved. We identify, similarly to \cite{neu_ebb_2021}, the objectives underlying its design as follows: (i) finalizing blocks and 
    (ii) having an available candidate chain that does not rely on block finality to grow. To this end, the Ethereum PoS protocol combines two blockchain designs: a Nakamoto-style protocol to build the tree of blocks containing the transactions and a BFT finalization protocol to progressively finalize blocks in the tree. The objective is to keep the blockchain creation process always available while guaranteeing the finalization of blocks through Byzantine-tolerant voting mechanisms. The finalization mechanism is a \emph{Finality Gadget} called \emph{Casper FFG}, and the fork choice rule to select candidate chains is \emph{LMD GHOST}.

Before introducing how the fork choice rule and the finality gadget work together, we will introduce the following basic concepts: (i) slots, epochs, and checkpoints, which set the pace of the protocol allowing validators to synchronize together on the different steps, (ii) committees formation and assignment of roles to validators as proposers and voters for each slot, and (iii) the different types of votes the validators must send in order to grow and maintain the candidate chain as well as the finalized chain.





In this section, we focus on providing a formal version of the protocol through pseudo-code, following the specification given by the Ethereum Foundation \cite{github_specs}. Every implementation of the protocol must be compliant with the specification. Note that a description of an initial plan of the protocol is proposed by Buterin et al. in \cite{buterin_combining_2020}. In this paper, we describe and formalize the current implementation of the protocol \cite{github_specs}.

\subsubsection{Slots, Epochs \& Checkpoints}
\label{subsubsec:time}
In proof-of-work protocols, such as originally in \cite{nakamoto_peer_2008}, the average frequency of the block creation is predetermined in the protocol, and the mining difficulty changes to follow that pace.
On the contrary, in Ethereum PoS, it is assumed that validators have synchronized clocks to propose blocks at regular intervals.
More specifically, in the protocol, time is measured in \emph{slots} and \emph{epochs}. A slot lasts 12 seconds.
Slots are assigned with consecutive numbers; the first slot is slot $0$. 
Slots are encapsulated in \emph{epochs}. An epoch is composed of 32 slots, thus lasting 6 minutes and 24 seconds. The first epoch (epoch 0) contains from slot 0 to slot 31; then epoch 1 contains slot 32 to 63, and so on. These slots and epochs allow associating the validators' roles to the corresponding time frame.
An essential feature of epochs is the \emph{checkpoint}. 
A checkpoint is a pair block-epoch $(b,e)$ where $b$ is the block of the first slot\footnote{In the event of an epoch without a block for the first slot, the block used for the checkpoint is the last block in the candidate chain, belonging then to a previous epoch. 
On the contrary, if the proposer of the first slot proposes multiple blocks, this will make multiple checkpoints for the other validators to choose from using the fork choice rule.
} of epoch $e$.

\subsubsection{Validators \& Committees}
\label{subsubsec:validator&committee}
Validators have two main roles: \emph{proposer} and \emph{attester}. The proposer's role consists in proposing a block during a specific slot\footnote{The current protocol specifications \cite{github_specs} indicate that correct validators should send their block proposition during the first third of their designated slot.}. 
This role is pseudo-randomly\footnote{Detailed explanation in \autoref{ssec:pseudo-randomness}.} given to 32 validators by epoch (one for each slot).
The attester's role consists in producing an attestation sharing the validator's view of the chain. This role is given once by epoch to each validator.

In each epoch, a validator is assigned to exactly one committee (of attesters). 
A committee $C_j$ is a subset of the whole set of validators. 
Each validator belongs to exactly one committee, i.e., $\forall j\neq k,  C_j \bigcap C_k = \emptyset $ and for each epoch $\bigcup_{i} C_i = \Pi $. Each committee is associated with a slot. During this slot, each member of the committee will have to cast an \emph{attestation} to indicate its view of the chain.

In short, during an epoch, validators are all attesters once and have a small probability of being proposers ($32/n$). The roles of proposer and attester are entirely distinct, i.e., the proposer of a slot is not necessarily an attester of that slot.

\subsubsection{Vote \& Attestation}
\label{subsubsec:vote}
There are two types of votes in Ethereum PoS, the \emph{block vote}\footnote{Also called GHOST vote in \cite{buterin_combining_2020} and in the specifications \cite{github_specs}.} and the \emph{checkpoint vote}\footnote{Also called FFG vote in \cite{buterin_combining_2020} and in the specifications \cite{github_specs}.}. The message containing these two votes is called \emph{attestation}. 
During an epoch, each validator must make one attestation. The attestation ought to be sent during a specific slot. This slot depends on the committee of which the validator is a member. 
The two types of votes,
checkpoint vote and block vote, have very distinct purposes. The checkpoint vote is used to finalize blocks to grow the finalized chain. The block vote is used to determine the candidate chain. 
Although validators cast their two types of votes in one attestation, an important distinction must be made between the two. Indeed, the two types of votes do not require the same condition to be taken into account. The checkpoint vote of an attestation is only considered when the attestation is included in a block. In contrast, the block vote is considered one slot after its emission, whether it is included in a block or not.
The code associated with the production of attestations is described in \autoref{algo:prepareAttestation} at \autoref{subsec:code}. We then describe in \autoref{algo:syncAttestation} how the reception of attestations is handled.

\subsubsection{Finality Gadget}
\label{subsubsec:finalityGadget}
The finality gadget is the mechanism that aims at finalizing blocks. The finality gadget grows the finalized chain disregarding the block production. This decoupling of the finality mechanism and the block production permits block availability even when the finalizing process is slowed down. This differs from protocols like Tendermint \cite{buchman_latest_2018}, where a new block can be added to the chain only after being finalized.


The finality gadget works at the level of epochs. Instead of finalizing blocks one by one, the protocol uses checkpoint votes to finalize entire epochs. 
We now present in more detail how the finality gadget of Ethereum PoS grows the finalized chain. 

Recall that to be taken into account, a checkpoint vote needs to be included in a block. The vote will then influence the behavior of validators regarding this particular branch. 
Thus, in \autoref{algo:Casper} of \autoref{subsec:code} the function \texttt{countMatchingCheckpointVote} only counts the matching checkpoint votes of attestations included in a block. 

\paragraph*{Justification} \label{paragraph:Justification}
The justification process is a step to achieve finalization\footnote{The genesis checkpoint (i.e., the checkpoint of the first epoch) is the exception to this rule: it is justified and finalized by definition.}. It operates on checkpoints at the level of epochs. 
Justification occurs thanks to checkpoint votes. The checkpoint vote contains a pair of checkpoints: the checkpoint \emph{source} and the checkpoint \emph{target}. We can count with \texttt{countMatchingCheckpointVote} the sum of balances of the validator's checkpoint votes with the same source and target. If validators controlling more than two-thirds of the stake make the same checkpoint vote, then we say there is a \emph{supermajority link} from the checkpoint source to the checkpoint target. The checkpoint target of a supermajority link is said to be \emph{justified}.

More formally, a checkpoint vote is in the form of a pair of checkpoints: $\big((a,e_a),(b,e_b)\big)$, also noted $(a,e_a) \xrightarrow{} (b,e_b)$. For the checkpoint vote $(a,e_a) \xrightarrow{} (b,e_b)$ we call $(a,e_a)$ the checkpoint source and $(b,e_b)$ the checkpoint target. The checkpoint source is necessarily of an earlier epoch than the checkpoint target, i.e., $e_a<e_b$. 
In line with \cite{buterin_combining_2020}, we say there is a \emph{supermajority link} from checkpoint $(a,e_a)$ to checkpoint $(b,e_b)$ if validators controlling more than two-thirds of the stake cast an attestation
with checkpoint vote $(a,e_a) \xrightarrow{} (b,e_b)$.
In this case, we write $(a,e_a) \xrightarrow{\texttt{J}} (b,e_b)$ and the checkpoint $(b,e_b)$ is justified.

\paragraph*{Finalization} \label{paragraph:Finalization}
     
The finalization process aims at finalizing checkpoints, thus growing the finalized chain. Checkpoints need to be justified before being finalized. Let us illustrate the finalization process with the two scenarios that can lead to finalization. The first case presents the main scenario in the synchronous setting. It shows how a checkpoint can be finalized in two epochs, the least amount of epochs needed for finalization.

\subparagraph*{Case 1:} The scenario is depicted in Figure \ref{fig:finalizationCase1}.
\begin{enumerate}
    \item Let $A=(a,e)$ and $B=(b,e+1)$ be checkpoints of two consecutive epochs such that $A=(a,e)$ is justified.
    \item A supermajority link occurs between checkpoints $A$ and $B$ where $A$ is the source and $B$ the target. This justifies checkpoint $B$.
    Hence, we can write: $(a, e) \xrightarrow{\texttt{J}} (b, e+1)$ or equivalently $A \xrightarrow{\texttt{J}} B$.
    \item This leads to $A$ being finalized.
\end{enumerate}

\begin{figure}[th]
    \centering
    
    \resizebox{.4\linewidth}{!}{
    \begin{tikzpicture}[
        finalizednode/.style={regular polygon, regular polygon sides=6,, draw=black, fill=green, minimum size=7mm},        finalizednode2/.style={regular polygon, regular polygon sides=6,, draw=black, minimum size=5.5mm},
        justifiednode/.style={regular polygon, regular polygon sides=6,, draw=black, minimum size=7mm},         
        justifiednode2/.style={regular polygon, regular polygon sides=6,, draw=black, minimum size=5.5mm},
        checkpointnode/.style={regular polygon, regular polygon sides=6,, draw=black, minimum size=7mm},
        squarednode/.style={regular polygon, regular polygon sides=6,, draw=gray!60, fill=gray!5, minimum size=5mm},
        ]
        \node (Dots1)      {$\cdots$};
        \node[justifiednode]      (A)       [right of= Dots1]              {$A$};
        \node[justifiednode2]      (A2)       [right of= Dots1]              {};
        \node[checkpointnode]        (B)       [right of= A] {$B$};
         \node (Dots2)   [right of= B]   {$\cdots$};
        
        \draw[-] (A.east) -- (B.west);
        \draw[-] (Dots1.east) -- (A.west);
        \draw[-] (B.east) -- (Dots2.west);
        
        \end{tikzpicture} 
        }
        \hspace{0cm}
        
    \resizebox{.4\linewidth}{!}{
    \begin{tikzpicture}[
        finalizednode/.style={regular polygon, regular polygon sides=6,, draw=black, fill=green, minimum size=7mm},        finalizednode2/.style={regular polygon, regular polygon sides=6,, draw=black, minimum size=5.5mm},
        justifiednode/.style={regular polygon, regular polygon sides=6,, draw=black, minimum size=7mm},         justifiednode2/.style={regular polygon, regular polygon sides=6,, draw=black, minimum size=5.5mm},
        checkpointnode/.style={regular polygon, regular polygon sides=6,, draw=black, minimum size=7mm},
        squarednode/.style={regular polygon, regular polygon sides=6,, draw=gray!60, fill=gray!5, minimum size=5mm},
        ]
        \node (Dots1)      {$\cdots$};
        \node[justifiednode]      (A)       [right of= Dots1]              {$A$};
        \node[justifiednode2]      (A2)       [right of= Dots1]              {};
        \node[justifiednode]        (B)       [right of= A] {$B$};
        \node[justifiednode2]        (B2)       [right of= A] {};
         \node (Dots2)   [right of= B]   {$\cdots$};
        
        \draw[-] (A.east) -- (B.west);
        \draw[-] (Dots1.east) -- (A.west);
        \draw[-] (B.east) -- (Dots2.west);
        \draw[->,double] (A.north) to[out=45]  (B.north);
        
        \end{tikzpicture}  
        }
        \hspace{0cm}
        
    \resizebox{.4\linewidth}{!}{
    \begin{tikzpicture}[
        finalizednode/.style={regular polygon, regular polygon sides=6,, draw=black, fill=green, minimum size=7mm},
        finalizednode2/.style={regular polygon, regular polygon sides=6,, draw=black, minimum size=5.5mm},
        justifiednode/.style={regular polygon, regular polygon sides=6,, draw=black, minimum size=7mm},         justifiednode2/.style={regular polygon, regular polygon sides=6,, draw=black, minimum size=5.5mm},
        checkpointnode/.style={regular polygon, regular polygon sides=6,, draw=black, minimum size=7mm},
        ]
        \node (Dots1)      {$\cdots$};
        \node[finalizednode]      (A)       [right of= Dots1]              {$A$};
        \node[finalizednode2]      (A2)       [right of= Dots1]              {};
        \node[justifiednode]        (B)       [right of= A] {$B$};
        \node[justifiednode2]        (B2)       [right of= A] {};
         \node (Dots2)   [right of= B]   {$\cdots$};
        
        \draw[-] (A.east) -- (B.west);
        \draw[-] (Dots1.east) -- (A.west);
        \draw[-] (B.east) -- (Dots2.west);
        \draw[->,double] (A.north) to[out=45]  (B.north);
        
        \end{tikzpicture} 
        }
    \caption{\small The figure depicts the finalization scenario of \textbf{Case 1} with the 3 steps from top to bottom. We represent a checkpoint with a hexagon,
    a justified checkpoint with a double hexagon, 
    and a finalized checkpoint with double hexagon coloured. 
    The arrow 
    between two checkpoints indicates a supermajority link.}
    \label{fig:finalizationCase1}
\end{figure}
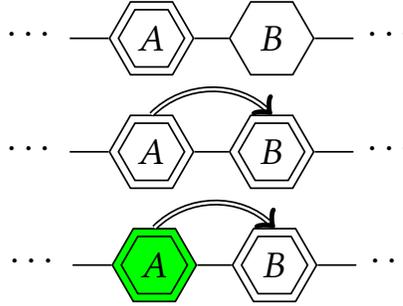

The second case illustrates the scenario in which two consecutive checkpoints are justified but not finalized. This means that the current highest justified checkpoint (e.g., $B$ in Figure \ref{fig:finalizationCase2}) was not justified with a supermajority link having the previous checkpoint $A$ as its source. Then occurs a new justification with the source and target being at the maximum distance (2 epochs) for the source to become finalized. An important note is that there is no limit on the distance between two checkpoints for justification to be possible. This limit only exists for finalization. 

\subparagraph*{Case 2: The scenario is depicted in \autoref{fig:finalizationCase2}.}
\begin{enumerate}
    \item Let $A=(a,e)$,  $B=(b,e+1)$ and $C=(c, e+2)$ be checkpoints of consecutive epochs such that $A$ and $B$ are justified. There is no supermajority link between $A$ and $B$, $A$ cannot be finalized as in Case 1 above.
    \item Now, a supermajority link occurs between checkpoints $A$ and $C$ where $A$ is the source and $C$ the target. This justifies checkpoint $C$, i.e., $A \xrightarrow{\texttt{J}} C$.
    \item This leads to $A$ being finalized. 
\end{enumerate}

\begin{figure}[th]
    \centering
    \resizebox{.5\linewidth}{!}{
    \begin{tikzpicture}[
    finalizednode/.style={regular polygon, regular polygon sides=6,, draw=black, fill=green, minimum size=7mm},        finalizednode2/.style={regular polygon, regular polygon sides=6,, draw=black, minimum size=5.5mm},
    justifiednode/.style={regular polygon, regular polygon sides=6,, draw=black, minimum size=7mm},         justifiednode2/.style={regular polygon, regular polygon sides=6,, draw=black, minimum size=5.5mm},
    checkpointnode/.style={regular polygon, regular polygon sides=6,, draw=black, minimum size=7mm},
    squarednode/.style={regular polygon, regular polygon sides=6,, draw=gray!60, fill=gray!5, minimum size=5mm},
    ]
    \node (Dots1)      {$\cdots$};
    \node[justifiednode]      (A)       [right of= Dots1]              {$A$};
    \node[justifiednode2]      (A2)       [right of= Dots1]              {};
    \node[justifiednode]        (B)       [right of= A] {$B$};
    \node[justifiednode2]        (B2)       [right of= A] {};
    \node[checkpointnode]        (C)       [right of= B] {$C$};
     \node (Dots2)   [right of= C]   {$\cdots$};
    
    \draw[-] (A.east) -- (B.west);
    \draw[-] (Dots1.east) -- (A.west);
    \draw[-] (B.east) -- (C.west);
    \draw[-] (C.east) -- (Dots2.west);
    
    \end{tikzpicture} 
    }
    \hspace{0cm}
    
    \resizebox{.5\linewidth}{!}{
    \begin{tikzpicture}[
    finalizednode/.style={regular polygon, regular polygon sides=6,, draw=black, fill=green, minimum size=7mm},        finalizednode2/.style={regular polygon, regular polygon sides=6,, draw=black, minimum size=5.5mm},
    justifiednode/.style={regular polygon, regular polygon sides=6,, draw=black, minimum size=7mm},         justifiednode2/.style={regular polygon, regular polygon sides=6,, draw=black, minimum size=5.5mm},
    checkpointnode/.style={regular polygon, regular polygon sides=6,, draw=black, minimum size=7mm},
    squarednode/.style={regular polygon, regular polygon sides=6,, draw=gray!60, fill=gray!5, minimum size=5mm},
    ]
    \node (Dots1)      {$\cdots$};
    \node[justifiednode]      (A)       [right of= Dots1]              {$A$};
    \node[justifiednode]        (B)       [right of= A] {$B$};
    \node[justifiednode]        (C)       [right of= B] {$C$};
    \node[justifiednode2]      (A2)       [right of= Dots1]              {};
    \node[justifiednode2]        (B2)       [right of= A] {};
    \node[justifiednode2]        (C2)       [right of= B] {};
     \node (Dots2)   [right of= C]   {$\cdots$};
    
    \draw[-] (A.east) -- (B.west);
    \draw[-] (Dots1.east) -- (A.west);
    \draw[-] (B.east) -- (C.west);
    \draw[-] (C.east) -- (Dots2.west);
    \draw[->,double] (A.north) to[out=45]  (C.north);
    \end{tikzpicture}
    }
    \hspace{0cm}
    
    \resizebox{.5\linewidth}{!}{
    \begin{tikzpicture}[
    finalizednode/.style={regular polygon, regular polygon sides=6,, draw=black, fill=green, minimum size=7mm},        finalizednode2/.style={regular polygon, regular polygon sides=6,, draw=black, minimum size=5.5mm},
    justifiednode/.style={regular polygon, regular polygon sides=6,, draw=black, minimum size=7mm},         justifiednode2/.style={regular polygon, regular polygon sides=6,, draw=black, minimum size=5.5mm},
    checkpointnode/.style={regular polygon, regular polygon sides=6,, draw=black, minimum size=7mm},
    squarednode/.style={regular polygon, regular polygon sides=6,, draw=gray!60, fill=gray!5, minimum size=5mm},
    ]
    \node (Dots1)      {$\cdots$};
    \node[finalizednode]      (A)       [right of= Dots1]              {$A$};
    \node[finalizednode2]      (A2)       [right of= Dots1]              {};
    \node[justifiednode]        (B)       [right of= A] {$B$};
    \node[justifiednode]        (C)       [right of= B] {$C$};
    \node[justifiednode2]        (B2)       [right of= A] {};
    \node[justifiednode2]        (C2)       [right of= B] {};
     \node (Dots2)   [right of= C]   {$\cdots$};
    
    \draw[-] (A.east) -- (B.west);
    \draw[-] (Dots1.east) -- (A.west);
    \draw[-] (B.east) -- (C.west);
    \draw[-] (C.east) -- (Dots2.west);
    \draw[->,double] (A.north) to[out=45]  (C.north);
    \end{tikzpicture}
    }
    \caption{\small The figure depicts the finalization scenario  of \textbf{case 2} with the 3 steps from top to bottom. 
    between two checkpoints indicates a supermajority link.}
    \label{fig:finalizationCase2}
\end{figure}
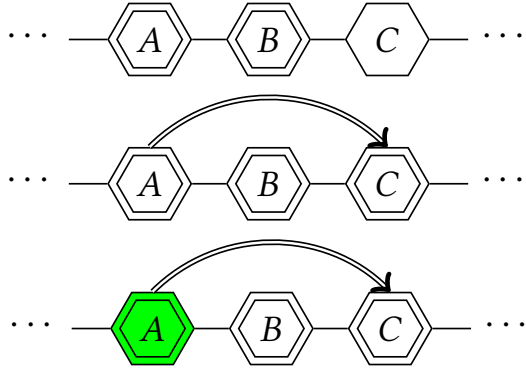

These two cases illustrate the fact that for a checkpoint to become finalized, it needs to be the source of a supermajority link between justified checkpoints. Once a checkpoint is finalized, all the blocks leading to it (including the block in the pair constituting the checkpoint) become finalized. 
We now describe the conditions for a checkpoint to be finalized more formally. Let $(a,e_a)$ and $(b,e_b)$ be two checkpoints such that $e_a<e_b$.
The checkpoint $(a,e_a)$ is finalized if the following conditions are respected: 
\begin{itemize}
    \item Source justified: checkpoint $(a,e_a)$ is justified.
    \item Supermajority link: there exists a supermajority link $(a,e_a) \xrightarrow{\texttt{J}} (b,e_b)$.
    \item Maximal gap: $e_b-e_a\leq 2$.\footnote{This last condition necessitating the two checkpoints to be at most 2 epochs away from each other is also called \emph{2-finality} \cite{buterin_combining_2020}.} Moreover, if $e_b-e_a=2$ then the checkpoint in between at epoch $e_a+1$($=e_b-1$) must be justified.
\end{itemize}
The importance of the last condition is illustrated by the \autoref{fig:notFinalizationCase}. 
In practice, these three conditions are only applied on the last four epochs. As mentioned in \cite{buterin_combining_2020}, at the implementation level, checkpoints more than 4 epochs old are not considered for finalization. All the conditions for finalization are illustrated by the last 4 conditions of \autoref{algo:Casper} in \autoref{subsec:code}. 

\begin{figure}[th]
    \centering
    \resizebox{.45\linewidth}{!}{
    \begin{tikzpicture}[
    finalizednode/.style={regular polygon, regular polygon sides=6,, draw=black, fill=green, minimum size=7mm},        finalizednode2/.style={regular polygon, regular polygon sides=6,, draw=black, minimum size=5.5mm},
    justifiednode/.style={regular polygon, regular polygon sides=6, draw=black, minimum size=7mm},
    justifiednode2/.style={regular polygon, regular polygon sides=6,, draw=black, minimum size=5.5mm},
    checkpointnode/.style={regular polygon, regular polygon sides=6, draw=black, minimum size=7mm},
    squarednode/.style={regular polygon, regular polygon sides=6, draw=gray!60, fill=gray!5, minimum size=5mm},
    ]
    \node (Dots1)      {$\cdots$};
    \node[justifiednode]      (A)       [right of= Dots1]              {$A$};
    \node[justifiednode2]      (A2)       [right of= Dots1]              {};
    \node[checkpointnode]        (B)       [right of= A] {$B$};
    \node[checkpointnode]        (C)       [right of= B] {$C$};
     \node (Dots2)   [right of= C]   {$\cdots$};
    
    \draw[-] (A.east) -- (B.west);
    \draw[-] (Dots1.east) -- (A.west);
    \draw[-] (B.east) -- (C.west);
    \draw[-] (C.east) -- (Dots2.west);
    
    \end{tikzpicture}   
    }
    \hspace{0cm}
    
    \resizebox{.45\linewidth}{!}{
    \begin{tikzpicture}[
    finalizednode/.style={regular polygon, regular polygon sides=6,, draw=black, fill=green, minimum size=7mm},        finalizednode2/.style={regular polygon, regular polygon sides=6,, draw=black, minimum size=5.5mm},
    justifiednode/.style={regular polygon, regular polygon sides=6,, draw=black, minimum size=7mm},         
    justifiednode2/.style={regular polygon, regular polygon sides=6,, draw=black, minimum size=5.5mm},
    checkpointnode/.style={regular polygon, regular polygon sides=6,, draw=black, minimum size=7mm},
    squarednode/.style={regular polygon, regular polygon sides=6,, draw=gray!60, fill=gray!5, minimum size=5mm},
    ]
    \node (Dots1)      {$\cdots$};
    \node[justifiednode]      (A)       [right of= Dots1]              {$A$};
    \node[justifiednode2]      (A2)       [right of= Dots1]              {};
    \node[checkpointnode]        (B)       [right of= A] {$B$};
    \node[justifiednode]        (C)       [right of= B] {$C$};
    \node[justifiednode2]        (C2)       [right of= B] {};
     \node (Dots2)   [right of= C]   {$\cdots$};
    \draw[-] (A.east) -- (B.west);
    \draw[-] (Dots1.east) -- (A.west);
    \draw[-] (B.east) -- (C.west);
    \draw[-] (C.east) -- (Dots2.west);
    \draw[->,double] (A.north) to[out=45]  (C.north);
    \end{tikzpicture}
    }
    \caption{\small This figure illustrates the case of two checkpoints $A$ and $C$ respecting all the conditions for finalization but the one that stipulates that a checkpoint $B$ in-between must be justified for $A$ to be finalized. 
    }
    \label{fig:notFinalizationCase}
\end{figure}
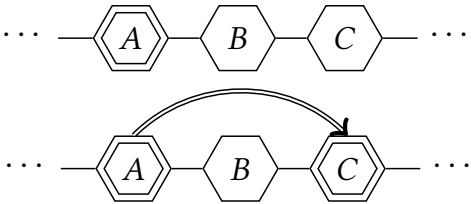

\subsubsection{Fork choice rule \& Block proposition}
\label{subsubsec:forkChoiceRule&blockProposition}
The fork choice rule is the mechanism that allows each validator to determine the candidate chain depending on their view of the BlockTree and the state of checkpoints. Ethereum PoS fork choice rule is LMD GHOST. The LMD GHOST fork choice rule stems from the Greedy Heaviest-Observed Sub-Tree (GHOST) rule \cite{sompolinsky_secure_2015}
which considers only each participant’s most recent vote (Latest Message Driven). 
During an epoch, each validator must make one \emph{block vote} on the block considered as the head of the candidate chain according to its view.
To determine the head of the candidate chain, the fork choice rule does the following:
\begin{enumerate}
    \item Go through the list of validators and check the last block vote of each.
    \item For each block vote, add a weight to each block of the chain that has the block voted as a descendent. The weight added is proportional to the stake of the corresponding validator. 
    \item Start from the block of the justified checkpoint with the highest epoch and continue the chain by following the block with the highest weight at each connection. Return the block without any child block. This block is the head of the candidate chain.
\end{enumerate}

The actual implementation is presented in \autoref{algo:GHOST} in \autoref{subsec:code}. This algorithm is similar to the one already presented in \cite{buterin_combining_2020}.
Albeit each \emph{block vote} being for a specific block, the fork choice rule considers all the chains leading to that block. This reflects the fact that a vote for a block is a vote for the chain leading to that block.  \autoref{fig:forkChoiceRuleExample} offers an explanation with a visualization of how attestations influence the fork choice rule. At each chain intersection, the fork choice rule favors the chain with the most attestations.

\begin{figure}[ht]
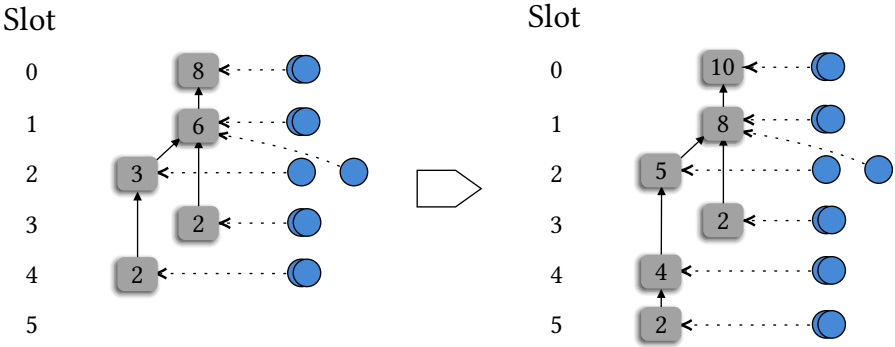

    \centering
    \resizebox{0.8\linewidth}{!}{

    }
 \caption{ \small Fork choice rule example observed from validator $i$'s point of view. We represent block votes with blue circles. 
    Block votes point to specific blocks indicating the block considered as the \emph{headblock} of the candidate chain at the moment of the vote.
    Each block has a number representing the value attributed by the fork choice rule algorithm (cf. \autoref{algo:GHOST}) to determine the candidate chain - we assume for this example that each validator has the same stake of 1. 
    On the left we represent the chain at the end of slot 4, and on the right at the end of slot 5.
    On the left, $i$'s fork choice rule gives the block of slot 4 as $C_i$'s head.
    On the right, the fork choice rule designates the block of slot 5 as the head of the candidate chain.}
    \label{fig:forkChoiceRuleExample}
\end{figure}

\subsubsection{Pseudo-Randomness}\label{ssec:pseudo-randomness}

The solution used by Ethereum to incorporate randomness in the consensus is called \textit{RANDAO}. RANDAO is a mechanism that creates pseudo-random numbers in a decentralized fashion. It works by aggregating different pseudo-random sources and mixing them.

\paragraph{Seed creation.} Each epoch produces a seed. This seed is created with the help of the block proposers of the said epoch. Each valid block contains a field called \texttt{randao\_reveal}\footnote{See \autoref{subsec:code} for the detailed explanation of its use.}. The seed is the hash of a \texttt{XOR} of all the \texttt{randao\_reveal} of an epoch plus an epoch number.

To prevent manipulation in the seed creation, each block's \texttt{randao\_reveal} must be the signature of specific data. 
The data to sign is the current epoch number. Anyone can then check that this signature is from the block proposer and for the correct data.

\paragraph{Seed utilization.} 
The algorithm using the seed is called \texttt{compute\_shuffled\_index} (cf. \autoref{algo:computeShuffledIndex}). This algorithm stems from the algorithm \textit{swap-or-not} that was introduced by \cite{hoang_enciphering_2012}.
\texttt{compute\_shuffled\_index} shuffles the validators list and gives new roles depending on their shuffled index. This pseudo-random shuffling function is used two times in Ethereum PoS consensus algorithm: for the proposer selection and for the committee selection. The proposer selection is described in \autoref{algo:getProposerIndex} and the committee selection in \autoref{algo:computeCommittee}.

\subsection{Pseudo Code}
\label{subsec:code}

In this section, we dive into a practical understanding of the mechanism behind the Ethereum PoS protocol. According to the specifications \cite{github_specs} and various implementations (such as Prysm \cite{prysm_code} and Teku \cite{teku_code}) we formalize the main functions of the protocol through pseudo-codes for a better understanding and for analysis purposes.

Each validator $p$ runs an instance of this particular pseudo code.
For instance, when a validator $p$ proposes a block, he broadcasts the following message: $\langle PROPOSE, ( slot,$ $\texttt{hash}(headBlock_p) ,$ $content) \rangle$, where $slot$ is the slot at which the proposer proposes the block, the hash of the $headBlock_p$ is the hash of the block considered to be the head of the candidate chain according to the fork choice rule (see \autoref{algo:GHOST}), and \textit{content} contains data used for pseudo-randomness, among other things that we will not detail here. We instead focus on the consensus protocol.

We describe in the following paragraphs the variables and functions used in the pseudo-code and the goal of these functions.

    \paragraph*{Variables.} During the computation, each variable takes a value that is subjective and may depend on the validator. We indicate with $p$ the fact that the value of variables depends on each process. The variable $tree_p$ is considered to be a graph of blocks with each block linked to its predecessor and thus represents the view of the blockchain (more precisely, $tree_p$ represents the view of all blocks received by the validator since the genesis of the system). 
Each $tree_p$ starts with the genesis block. 
$role_p$ corresponds to the different roles a validator can have, which is possibly none (i.e., for each slot, the validator can be proposer, attester, or have no role). $role_p$ is a list containing the role(s) of the validator for the current slot.
The $slot_p$ is a measure of time. In particular, a slot corresponds to 12 seconds. $slot_p \in \mathbb{N}$. Slot 0 begins at the time of the genesis block and is incremented every 12 seconds. $headblock_p$ is the head of the candidate chain according to $p$'s local view and the fork choice rule. 
A checkpoint $C$ is a pair block-epoch that is used for finalization. $C$ has two attributes which are $justified$ and $finalized$ which can be true or false (e.g., if $C$ is only justified $C.justified= $ \texttt{true} and $C.finalized= $ \texttt{false}). $lastJustifiedcheckpoint_p$ is the \textit{justified} checkpoint with the highest epoch. $currentCheckpoint_p$ is the checkpoint of the current epoch. The list $attestation_p$ is a list of size $n$ (i.e., the total number of validators). This list is updated only to contain the latest messages of validators (of at most one epoch old). 
$CheckpointVote_p$ is a pair of checkpoints, so a pair of pairs, used to make a checkpoint vote. Let us stress the fact that all those variables are local, and at any time, two different validators may have different valuations of those variables.

\paragraph*{Functions.} 
We describe the main functions of the protocol succinctly before giving the pseudo code associated and a more detailed explanation:
\begin{itemize}
    \item \texttt{validatorMain} is the primary function of the validator, which launches the execution of all subsidiary functions. 
    \item \texttt{sync} is a function that runs in parallel of the \texttt{validatorMain} function and ensures the synchronization of the validator. It updates the slot, the role(s) and processes justification and finalization at the end of the epoch and when a new validator joins the system.
    \item \texttt{getHeadBlock} applies the fork choice rule. This is the function that indicates the head block of the candidate chain.
    \item \texttt{justificationFinalization} is the function that handles the justification and finalization of checkpoints.
\end{itemize}

We depict in \autoref{algo:validatorMain} the main procedure of the validator. This procedure initializes all the values necessary to run a validator. In this paper, we consider the selection of validators already made to focus on the description of the consensus algorithm itself. The main starts a routine called \texttt{sync} to run in parallel. Then there is an infinite loop that handles the call to an appropriate function when a validator needs to take action for its role(s).

\begin{algorithm}[th]
\caption{Main code for a validator $p$}
\label{algo:validatorMain}
\begin{algorithmic}[1]
\setlength{\lineskip}{3pt}
\Procedure{validatorMain}{\ }
\State $tree_p \gets nil$ \Comment{The tree represents the linked received blocks} 
\State $role_p \gets [\ ]$ \Comment{$role_p $ can be ROLE\_PROPOSER and/or ROLE\_ATTESTER when it is not empty}
\State $slot_p \gets 0$ \Comment{$slot_p \in \mathbb{N}$}
\State $lastJustifiedCheckpoint_p \gets (0, genesisBlock)$ \Comment{A checkpoint is a tuple (epoch, block)}
\State $attestation_p \gets [\ ]$ \Comment{List of latest attestations received for each validator}
\State $validatorIndex_p \gets $ index of the validator \Comment{Each validator has a unique index} 
\State $listValidator \gets [p_0,p_1,\dots,p_{N-1} ]$ \Comment{A list of the validators index}
\State $balances \gets [\ ]$ \Comment{A list of the balances of the validators, their stake}
\State 
\State \textbf{start} \hyperref[algo:sync]{\texttt{sync}($tree_p,$ $slot_p,$  $attestation_p,$  $lastJustifiedCheckpoint_p,$ $role_p,$ 
%$listValidator$, 
$balances$)} 
\State
\While{\texttt{true}}
\If{$role_p\neq \emptyset$}
\If{ROLE\_PROPOSER $\in role_p$ } 
\State \hyperref[algo:prepareBlock]{\texttt{prepareBlock}}()
\EndIf
\If{ROLE\_ATTESTER  $\in role_p$} 
\State \hyperref[algo:prepareAttestation]{\texttt{prepareAttestation}}()
\EndIf
\State $role_p \gets [\ ] $
\Else
\State no role assigned \Comment{No action required}
\EndIf
\EndWhile
\EndProcedure
\end{algorithmic}
\end{algorithm}

The roles performed by the validator when proposer or attester are defined in \autoref{algo:prepareBlock} and \autoref{algo:prepareAttestation}, respectively. 
The proposer of a block does the three following tasks: 
\begin{enumerate}
    \item Get the head of its candidate chain to have a block to build upon;
    \item Sign a predefined pair to participate in the process of pseudo-randomness;
    \item Broadcast a new block built on top of the head of the candidate chain. 
\end{enumerate}

The attestation is composed of three parts: the slot, the block vote, and the checkpoint vote. The validator uses the fork choice rule presented in \autoref{algo:GHOST} to obtain the block chosen for the block vote. \autoref{algo:GHOST} and the one stemming from it, \autoref{algo:weight}, have already been defined in \cite{buterin_combining_2020}. We restated them here for the sake of completeness. For the checkpoint vote, an honest validator should always vote for the current epoch as the target and take the justified checkpoint with the highest epoch (i.e., \textit{lastJustifiedCheckpoint}) as the source. 
In order to broadcast this attestation, the attester must wait for one of two things, either a block has been proposed for this slot or 1/3 of the slot (i.e., 4 seconds) has elapsed. This is ensure by the function \texttt{waitForBlockOrOneThird}.

\begin{algorithm}[th]
\caption{broadcast block}
\label{algo:prepareBlock}
\begin{algorithmic}[1]
\setlength{\lineskip}{3pt}
\Procedure{prepareBlock}{ } 
    \State $headBlock_p \gets$ 
    \hyperref[algo:GHOST]{\texttt{getHeadBlock(    %$tree_p, attestation_p, latestJutifiedCheckpoint_p$
    )}} 
    \State $randaoReveal \gets $ \texttt{sign}( \texttt{epochOf($slot$)})
    \State \textbf{broadcast} $\langle PROPOSE, ( slot, \texttt{hash}(headBlock_p), randaoReveal_p,$ $ content) \rangle$
\EndProcedure
\end{algorithmic}
\end{algorithm}

\begin{algorithm}[th]
\caption{Broadcast Attestation}
\label{algo:prepareAttestation}
\begin{algorithmic}[1]
\setlength{\lineskip}{3pt}
\Procedure{prepareAttestation}{ } 
    \State \texttt{waitForBlockOrOneThird()}  \Comment{wait for a new block in this slot or $\frac{1}{3}$ of the slot
    }
    \State $headBlock_p \gets$ \hyperref[algo:GHOST]{\texttt{getHeadBlock(%$treep, attestation, latestJutifiedCheckpoint$
    )}}
    \State $currentCheckpoint_p \gets $ (first block of the epoch, \texttt{epochOf}($slot$) )
    \State $CheckpointVote_p \gets \big(lastJustifiedCheckpoint_p, currentCheckpoint_p \big) $
    \State \textbf{broadcast} $\langle ATTEST, (slot_p, \underbrace{\texttt{hash}(headBlock_p)}_\textrm{block vote}, \underbrace{CheckpointVote_p}_\textrm{checkpoint vote}) \rangle$
\EndProcedure
\end{algorithmic}
\end{algorithm}

The synchronization of the validator $p$ is handled by the function \texttt{sync} described in \autoref{algo:sync}. This algorithm allows the validator to update its view of the blockchain, in particular,  the current slot, the list of attestations, the last justified checkpoint, the validator's role, and the balances of all validators. To determine its role(s), the validator verify the index of the designated validator for the current slot and the set of indexes forming the committee of the current slot.
In more details, there are two conditions to give a role to a validator for the current slot. The first condition calls \autoref{algo:getProposerIndex} and gives the validator $p$ the role of proposer if its index is the one of the current proposer. The second condtion checks whether $p$ belongs to the committee of the current slot (see \autoref{algo:computeCommittee}). The two roles of proposer and attester are entirely distinct, i.e., the proposer of a slot is not necessarily an attester of that slot.

The synchronization function also starts two other routines which are \texttt{syncBlock} and \texttt{syncAttestation} corresponding to \autoref{algo:syncBlock} and \autoref{algo:syncAttestation}, respectively. 
These routines are used to handle the broadcast of proposers and attesters. In both functions, upon receiving a block or an attestation, the validator $p$ verifies that it is valid thanks to the \texttt{isValid} function. It is important to note that upon receiving a block, a validator can update the last justified checkpoint only if the current epoch has not started more than 8 slots ago. 
This particular condition is what the patch has brought to prevent a liveness attack (see \autoref{sec:livenessAttack}).

\begin{algorithm}[th]
\caption{Sync}
\label{algo:sync}
\begin{algorithmic}[1]
\setlength{\lineskip}{3pt}
\Procedure{sync}{$tree, slot, attestation,$ $role,$ $lastJustifiedCheckpoint,$} 
    \State \textbf{start} \hyperref[algo:syncBlock]{\texttt{syncBlock($slot, tree$)}}
    \State \textbf{start} \hyperref[algo:syncAttestation]{\texttt{syncAttestation($attestation$)}}
    \Repeat
        \State $previousSlot \gets slot$
        \State $slot \gets \lfloor$ time in seconds since genesis block / 12 $\rfloor$
        \If{$previousSlot \neq slot$} \Comment{If we start a new slot}
            \State \textit{roleSlotDone} $ \gets $ false
            \If{$validatorIndex_p$ = \hyperref[algo:getProposerIndex]{\texttt{getProposerIndex}}(\hyperref[algo:getSeed]{\texttt{getSeed}}(current epoch), $slot$) }
            \State \texttt{append} ROLE\_PROPOSER to $role_p$
            \EndIf
            \If{$validatorIndex_p \in$ \hyperref[algo:computeCommittee]{\texttt{computeCommittee}}(\hyperref[algo:getSeed]{\texttt{getSeed}}(current epoch), $slot$)}
            \State \texttt{append} ROLE\_ATTESTER to $role_p$
            \EndIf
        \EndIf
        \If{$slot \pmod{32} = 0$} \Comment{First slot of an epoch}
            \State \hyperref[algo:Casper]{\texttt{jutificationFinalization}($tree,$ $lastJustifiedCheckpoint$)}
        \EndIf
    \Until{{validator exit}}
\EndProcedure
\end{algorithmic}
\end{algorithm}

\begin{algorithm}[th]
\caption{Sync Block}
\label{algo:syncBlock}
\begin{algorithmic}[1]
\setlength{\lineskip}{3pt}
\Procedure{syncBlock}{$slot, tree$}
    \Upon{$\langle PROPOSE, ( slot_i, \texttt{hash}(headBlock_i) , randaoReveal_i,$ $ content_i) \rangle$}{validator $i$}
        \State $block \gets \langle PROPOSE, ( slot_i, \texttt{hash}(headBlock_i) , $ $ randaoReveal_i,$ $ content_i) \rangle$
        \If{\texttt{isValid}($block$)} 
            \State add $block$ to $tree$ 
        \If{$slot \pmod{32} \leq 8$ }
            \State update justified checkpoint if necessary
        \EndIf
        \EndIf
    \EndUpon
\EndProcedure
\end{algorithmic}
\end{algorithm}

\begin{algorithm}[th]
\caption{Sync Attestation}
\label{algo:syncAttestation}
\begin{algorithmic}[1]
\setlength{\lineskip}{3pt}
\Procedure{syncAttestation}{$attestation$} 
    \Upon{$\langle ATTEST, (slot_i, headBlock_i, checkpointEdge_i) \rangle$}{validator $i$}
        \State $attestation_i \gets \langle ATTEST, (slot_i, headBlock_i, checkpointEdge_i) \rangle$
        \If{\texttt{isValid}($attestation_i$)}
            \State $attestation[i] \gets attestation_i$ 
        \EndIf
    \EndUpon
\EndProcedure
\end{algorithmic}
\end{algorithm}
\begin{algorithm}[th]
\caption{Get Head Block}
\label{algo:GHOST}
\begin{algorithmic}[1]
\setlength{\lineskip}{3pt}
\Procedure{getHeadBlock}{ }
\State $block \gets$ block of the justified checkpoint with the highest epoch
\While{$block$ has at least one child}
\State $block \gets \underset{b'\text{ child of }block}{\argmax}$ \hyperref[algo:weight]{\texttt{weight}}($tree, Attestation, b'$)  
\State (ties are broken by hash of the block header)
\EndWhile
\State \Return $block$
\EndProcedure
\end{algorithmic}
\end{algorithm}

\begin{algorithm}[th]
\caption{Weight}
\label{algo:weight}
\begin{algorithmic}[1]
\setlength{\lineskip}{3pt}
\Procedure{weight}{$tree, Attestation,block$}
\State $w \gets 0$
\For{every validator $v_i$}
\If{$\exists a \in \ Attestation$ an attestation of $v_i$ for $block$ or a descendant of $block$}
\State $w \gets w + $ stake of $v_i$
\EndIf 
\EndFor
\State \Return $w$
\EndProcedure
\end{algorithmic}
\end{algorithm}

\autoref{algo:Casper} can be considered the most intricate one. This algorithm is responsible for the justification or finalization of the checkpoints at the end of an each epoch. To do so, it counts the number of checkpoint votes with the same source and target. If this number corresponds to more than 2/3 of the stake of all validators, then the target is considered justified for the validator running this algorithm. The last four conditions concern finalization. They verify among the last four checkpoints which one fulfills the conditions to become finalized. The conditions to become finalized are formally described in \autoref{subsec:overview} and can be summarized by: the checkpoint must be the source of a supermajority link, and all the checkpoints between the source and target included must be justified.

\begin{algorithm}[th]
\caption{Justification and Finalization}
\label{algo:Casper}
\begin{algorithmic}[1]
\setlength{\lineskip}{3pt}
\Procedure{jutificationFinalization}{$tree,$ $ lastJustifiedCheckpoint$}
\State $source \gets lastJustifiedCheckpoint$ 
\State $target \gets $ the current checkpoint
\State $nbCheckpointVote \gets $ \texttt{countMatchingCheckpointVote}($source,$ $target$)
\\
\hskip\algorithmicindent {\color{gray} $\triangleright$ \emph{justification process}:}
\If{ $nbCheckpointVote \geq \frac{2}{3} *$ total balance of validators }
\State $target.justified \gets $ \texttt{true}
\State $lastJustifiedCheckpoint \gets target$
\EndIf
\\
\hskip\algorithmicindent {\color{gray} $\triangleright$ \emph{finalization process}:}
\State $A,B,C,D \gets $ the last 4 checkpoints \Comment{With $D$ being the current checkpoint.}
\If{$A.justified$ $\land$ $B.justified$ $\land$ ($A \xrightarrow{\texttt{J}} C$)}
\State $A.finalized \gets $ \texttt{true} \Comment{Finalization of $A$}
\EndIf 
\If{$B.justified$ $\land$ ($B \xrightarrow{\texttt{J}} C$)}
\State $B.finalized \gets $ \texttt{true} \Comment{Finalization of $B$}
\EndIf 
\If{$B.justified$ $\land$ $C.justified$ $\land$ ($B \xrightarrow{\texttt{J}} D$)}
\State $B.finalized \gets $ \texttt{true} \Comment{Finalization of $B$}
\EndIf 
\If{$C.justified$ $\land$ ($C \xrightarrow{\texttt{J}} D$)} 
\State  $C.finalized \gets $ \texttt{true} \Comment{Finalization of $C$}
\EndIf 
\EndProcedure
\end{algorithmic}
\end{algorithm}


\begin{algorithm}
\caption{Get randao mix}
\label{algo:getRandaoMix}
\begin{algorithmic}[1]
\setlength{\lineskip}{3pt}
\Procedure{getRandaoMix}{$epoch$}
\State $mix \gets 0$
\State $headBlock \gets $ \hyperref[algo:GHOST]{\texttt{getHeadBlock}}()
\For{\textbf{each} $block$ parent of $headBlock$ and belonging to $epoch$}
\State $mix \gets mix \oplus \texttt{hash}(block.$randaoReveal)
\Comment{$\oplus$ is a bit-wise \texttt{XOR} operator}
\EndFor
\State \Return $mix$
\EndProcedure
\end{algorithmic}
\end{algorithm}

\begin{algorithm}
\caption{Get seed}
\label{algo:getSeed}
\begin{algorithmic}[1]
\setlength{\lineskip}{3pt}
\Procedure{getSeed}{$epoch$}
\State $mix \gets$ \hyperref[algo:getRandaoMix]{\texttt{getRandaoMix}}($epoch-2$) \Comment{The seed of an epoch $i$ is based on the randao mix of epoch $i-2$ }
\State \Return \texttt{hash}($epoch + mix$)
\EndProcedure
\end{algorithmic}
\end{algorithm}

\

The pseudo-randomness needs a different seed for each epoch to yield different results. This is ensured by hashing the RANDAO mix and the epoch number as shown in \autoref{algo:getSeed}. Adding the epoch number is helpful if no block is proposed during an entire epoch. This corner case would always result in the same seed if it were not for the epoch number.

The RANDAO mix is computed in \autoref{algo:getRandaoMix}. The computation of the RANDAO mix of a given epoch consists of \textit{XORing} all the \textit{randaoReveal} of the blocks in that particular epoch.  We consider only the blocks of that particular epoch that belong to the candidate chain. 

The RANDAO mix of epoch $e-2$ determines the role of validators in epoch $e$. Hence, with \autoref{algo:computeCommittee}, as soon as epoch $e-2$ is over, validators can know to which committee they belong to at epoch $e$. \texttt{computeCommittee} (\autoref{algo:computeCommittee}) is the function that, given a seed and an epoch, returns the list of validators index corresponding to the committee for the slot specified. The number of validators in each committee\footnote{In the actual implementation, committees have a maximum size of 2048 \cite{github_specs}.} is computed to be less than $N/32$ (with $N$ the number of validators). Then using the shuffled index computed with \autoref{algo:computeShuffledIndex} a committee of the given size is drawn according to the slot in argument.
All validators of the committee will have to perform the role of attester during this slot. 

Since the balance can change until the previous epoch, block proposers are known at the end of epoch $e-1$ for epoch $e$. \autoref{algo:getProposerIndex} is the one handling the selection of a proposer for a designated slot. It starts by creating a seed specifically for the slot in question. Then there is a loop starting with a pseudo-random selection of the validator's index. The loop stops only when a validator meets the condition criteria. This condition is equivalent to being selected with a probability depending on the balance. Thus, the validator with index \textit{proposerIndex} is selected with probability $\frac{\textit{effectiveBalance}}{32}$, with \textit{effectiveBalance} being the stake of \textit{proposerIndex} capped to $32$, i.e., $min(balance, 32)$. 
\

Both algorithms \ref{algo:getProposerIndex} and \ref{algo:computeCommittee} make the proposer and the committee selection resort to \autoref{algo:computeShuffledIndex} to imbue randomness. As mentioned in \autoref{sec:ethereumConsensusProtocol}, \autoref{algo:computeShuffledIndex} stems from the algorithm \textit{swap-or-not} \cite{hoang_enciphering_2012}. Its name helps us understand the principle behind the algorithm: select a validator and its opposite (based on a pivot) and swap them or not. The selection of the validator and the swap depend on the value of a hash. An essential aspect of this algorithm is that it can get the index of validators in the shuffled list without having to compute the shuffling of the whole list of validators. This reduces unnecessary computation.

\begin{algorithm}
\caption{Compute shufflex index}
\label{algo:computeShuffledIndex}
\begin{algorithmic}[1]
\setlength{\lineskip}{3pt}
\Procedure{computeShuffledIndex}{$index, seed, nbValidators$}
\For{$i = 0 $ \textbf{to} $90$}
\State $pivot \gets $ \texttt{hash}($seed + i$) $(mod\;$ \textit{nbValidators}$)$
\State $ flip \gets pivot + nbValidators - index \; (mod\;$ \textit{nbValidators}$)$
\State $position \gets \max(index,\; flip)$
\State $bit \gets $ \texttt{hash}($seed+i+position$)$\pmod{2}$
\If{$bit = 0 \; (mod\;$ \textit{nbValidators}$)$}
\State $index \gets flip$
\EndIf
\EndFor
\State \Return $index$
\EndProcedure
\end{algorithmic}
\end{algorithm}

\begin{algorithm}
\caption{Get proposer index}
\label{algo:getProposerIndex}
\begin{algorithmic}[1]
\setlength{\lineskip}{3pt}
\Procedure{getProposerIndex}{$seed$, $slot$}
\State MAX\_RANDOM\_BYTE $\gets 2^8-1$
\State $i \gets 0$
\State $proposerSeed \gets $ \texttt{hash}($seed$+$slot$)
\State $nbValidators \gets $ \texttt{length}($listValidator$)
\While{true}
\State $proposerIndex \gets  listValidator$\hyperref[algo:computeShuffledIndex]{[\texttt{computeShuffledIndex}($i, seed, nbValidators$)]}
\State $randomByte \gets$ first byte of \texttt{hash}($proposerSeed + i \pmod{nbValidators}$) 
\State \textit{effectiveBalance} $\gets listValidators[proposerIndex]$.effectiveBalance
\If{\textit{effectiveBalance} $*$ MAX\_RANDOM\_BYTE $\geq $ MAX\_EFFECTIVE\_BALANCE $* randomByte$  }
\State \Return $proposerIndex$
\EndIf
\State $i \gets i + 1$
\EndWhile
\EndProcedure
\end{algorithmic}
\end{algorithm}

\begin{algorithm}
\caption{Compute Committee}
\label{algo:computeCommittee}
\begin{algorithmic}[1]
\setlength{\lineskip}{3pt}
\Procedure{computeCommittee}{$seed, slot$}
\State $committee \gets [\ ]$
\State \textit{nbValidatorByCommittee} $\gets \lceil \texttt{lenght}(listValidator) / 32 \rceil$
\For{$i = (slot \pmod{32}) *$\textit{nbValidatorByCommittee}  \textbf{to} $(slot+1 \pmod{32}) *$\textit{nbValidatorByCommittee} $-1$ }
\State $committee$.\texttt{append}($listValidator$[\hyperref[algo:computeShuffledIndex]{\texttt{computeShuffledIndex}($i, seed, nbValidators$)}])
\EndFor
\State \Return $committee$
\EndProcedure
\end{algorithmic}
\end{algorithm}



\section{Liveness attack}
\label{sec:livenessAttack}

In this section, we describe a liveness attack called \emph{bouncing attack} that delays finality in a partially synchronous network after \texttt{GST}. 
Previous works also exhibit liveness attacks against the protocol using the intertwining of the fork choice rule and the finality gadget \cite{nakamura_analysis_2019, neu_ebb_2021}.  
To prevent this type of attack, the protocol now contains a ``patch'' \cite{pullRequest_bouncing_2022} that was suggested on the Ethereum research forum \cite{nakamura_prevention_2019}. We show that the implemented patch is insufficient and this type of attack is still possible if certain conditions are verified. 
This is a probabilistic liveness attack against the protocol of Ethereum Proof-of-Stake. 
Our attack can happen with less than 1/3 of Byzantine validators, as discussed in \autoref{subsec:probabilisticBouncingAttack}. 

\subsection{Bouncing Attack}
\label{subsec:bouncingAttack}

\begin{figure}[th]
    \centering
    
\begin{tikzpicture}[x=0.75pt,y=0.75pt,yscale=-1,xscale=1]

\draw  [draw opacity=0][fill={rgb, 255:red, 126; green, 211; blue, 33 }  ,fill opacity=1 ] (221.29,15.74) -- (221.29,24.86) -- (217.29,24.86) -- (217.29,15.74) -- cycle ;
\draw  [fill={rgb, 255:red, 132; green, 255; blue, 0 }  ,fill opacity=1 ] (52.64,96.08) -- (46.53,106.67) -- (34.31,106.67) -- (28.2,96.08) -- (34.31,85.5) -- (46.53,85.5) -- cycle ;
\draw   (50.05,96.27) -- (45.14,104.77) -- (35.32,104.77) -- (30.41,96.27) -- (35.32,87.76) -- (45.14,87.76) -- cycle ;
\draw [color={rgb, 255:red, 251; green, 193; blue, 96 }  ,draw opacity=1 ][line width=1.5]  [dash pattern={on 1.69pt off 2.76pt}]  (220,108) .. controls (220.47,103.35) and (220.07,107.78) .. (220.01,91.94) ;
\draw [shift={(220,88)}, rotate = 90] [fill={rgb, 255:red, 251; green, 193; blue, 96 }  ,fill opacity=1 ][line width=0.08]  [draw opacity=0] (10.92,-2.73) -- (0,0) -- (10.92,2.73) -- cycle    ;
\draw   (52.64,56.8) -- (46.53,67.38) -- (34.31,67.38) -- (28.2,56.8) -- (34.31,46.22) -- (46.53,46.22) -- cycle ;
\draw   (50.19,56.68) -- (45.28,65.19) -- (35.46,65.19) -- (30.56,56.68) -- (35.46,48.18) -- (45.28,48.18) -- cycle ;
\draw   (52.6,20.3) -- (46.49,30.88) -- (34.26,30.88) -- (28.15,20.3) -- (34.26,9.72) -- (46.49,9.72) -- cycle ;
\draw  [draw opacity=0][fill={rgb, 255:red, 208; green, 2; blue, 27 }  ,fill opacity=1 ] (222.92,52.24) -- (222.92,61.36) -- (218.92,61.36) -- (218.92,52.24) -- cycle ;
\draw  [draw opacity=0][fill={rgb, 255:red, 117; green, 214; blue, 11 }  ,fill opacity=1 ] (222.92,61.36) -- (221.98,61.36) -- (218.92,56.57) -- (218.92,54.25) -- (222.92,60.51) -- (222.92,61.36) -- cycle ;
\draw  [draw opacity=0][fill={rgb, 255:red, 117; green, 214; blue, 11 }  ,fill opacity=1 ] (218.92,61.36) -- (218.92,59.03) -- (220.41,61.36) -- (218.92,61.36) -- cycle ;
\draw  [draw opacity=0][fill={rgb, 255:red, 117; green, 214; blue, 11 }  ,fill opacity=1 ] (222.92,58.51) -- (218.92,52.24) -- (220.41,52.24) -- (222.92,56.19) -- (222.92,58.51) -- cycle ;
\draw  [draw opacity=0][fill={rgb, 255:red, 117; green, 214; blue, 11 }  ,fill opacity=1 ] (222.92,52.24) -- (222.92,54.3) -- (221.61,52.24) -- (222.92,52.24) -- cycle ;

\draw   (21.5,3) -- (431.5,3) -- (431.5,121) -- (21.5,121) -- cycle ;

\draw (68.42,20.3) node [anchor=west] [inner sep=0.75pt]   [align=left] {{\footnotesize checkpoint}};
\draw (68.42,56.8) node [anchor=west] [inner sep=0.75pt]   [align=left] {{\footnotesize justified checkpoint}};
\draw (68.42,96.08) node [anchor=west] [inner sep=0.75pt]   [align=left] {{\footnotesize finalizied checkpoint}};
\draw (231.42,20.3) node [anchor=west] [inner sep=0.75pt]  [font=\footnotesize] [align=left] {attestation from honest validateur};
\draw (232.92,56.8) node [anchor=west] [inner sep=0.75pt]  [font=\footnotesize] [align=left] {attestation from Byantine validateur};
\draw (234.92,98.77) node [anchor=west] [inner sep=0.75pt]  [font=\footnotesize] [align=left] {designate which checkpoint is the \\target checkpoint for an attestation};

\end{tikzpicture}

    \caption{\small This figure serves as a summary of the signification of the main diagrams of other figures.}
    \label{fig:finalizationCase1}
\end{figure}
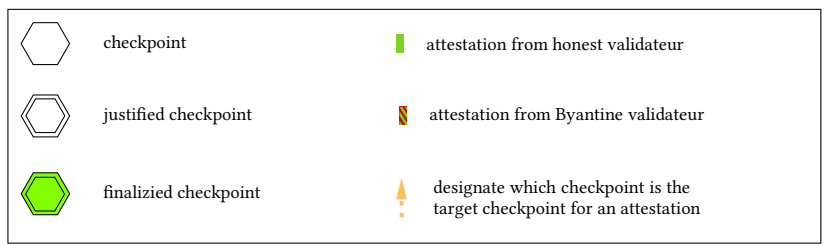

The \emph{Bouncing Attack} \cite{nakamura_analysis_2019} describes a liveness attack where the suffix of the chain changes repetitively between two candidate chains, thus preventing the chain from finalizing any checkpoint.
The Bouncing Attack exploits the fact that the candidate chains should start from the justified checkpoint with the highest epoch. 
It is possible for Byzantine validators to divide honest validators' opinions by justifying a new checkpoint once some honest validators have already cast their vote (made an attestation) during the asynchronous period before \texttt{GST}. 

The bouncing attack becomes possible once there is a \emph{justifiable} checkpoint in a different branch from the one designated by the fork choice rule with a higher epoch than the current highest justified checkpoint. 
A \emph{justifiable checkpoint} is a checkpoint that can become justified only by adding the checkpoint votes of Byzantine validators. 
If this setup occurs, the Byzantine validators could make honest validators start voting for a different checkpoint on a different chain, leaving a justifiable checkpoint again for them to repeat their attack and thus making validators \emph{bounce} between two different chains and not finalizing any checkpoint.
Hence the name Bouncing attack. 

Let us illustrate the attack with a concrete case. We show in \autoref{fig:simplifiedBoncingAttack}  an oversimplified case with only 10 validators, among which 3 are Byzantines.  
To occur, the attack needs to have a justifiable checkpoint with a higher epoch than the last justified checkpoint. We reach this situation before \texttt{GST}, which is presented in the left part of the figure. After reaching \texttt{GST}, Byzantine validators wait for honest validators to make a new checkpoint justifiable. When a new checkpoint is justifiable, the Byzantine validators cast their vote to justify another checkpoint, as shown by the right part of the figure. 
This will lead honest validators to start voting for the left branch, thus reaching a situation similar to the first step allowing the bouncing attack to continue. The repetition of this behavior is the bouncing attack. We emphasize this example in more detail in \autoref{fig:complexBouncingAttack} by detailing the sequence of votes allowing a ``bounce'' to occur and leaving a justifiable checkpoint on the other branch.

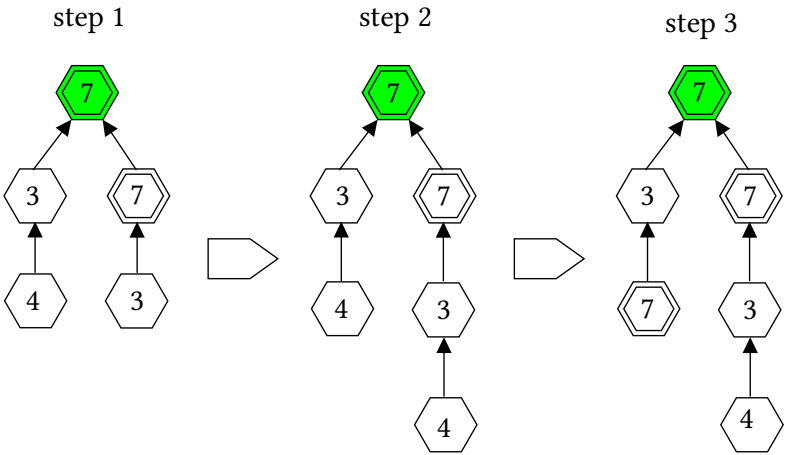
\begin{figure}
   \resizebox{.7\linewidth}{!}{
\begin{tikzpicture}[x=0.75pt,y=0.75pt,yscale=-1,xscale=1]

\draw    (22.18,109.35) -- (22.18,92.23) ;
\draw [shift={(22.18,89.23)}, rotate = 90] [fill={rgb, 255:red, 0; green, 0; blue, 0 }  ][line width=0.08]  [draw opacity=0] (6.25,-3) -- (0,0) -- (6.25,3) -- cycle    ;
\draw    (62.18,109.35) -- (62.18,92.29) ;
\draw [shift={(62.18,89.29)}, rotate = 90] [fill={rgb, 255:red, 0; green, 0; blue, 0 }  ][line width=0.08]  [draw opacity=0] (6.25,-3) -- (0,0) -- (6.25,3) -- cycle    ;
\draw   (34.59,120.32) -- (28.51,130.86) -- (16.34,130.86) -- (10.25,120.32) -- (16.34,109.78) -- (28.51,109.78) -- cycle ;
\draw   (74.19,120.37) -- (68.08,130.96) -- (55.86,130.96) -- (49.75,120.37) -- (55.86,109.79) -- (68.08,109.79) -- cycle ;
\draw   (34.44,79.18) -- (28.33,89.76) -- (16.11,89.76) -- (10,79.18) -- (16.11,68.59) -- (28.33,68.59) -- cycle ;
\draw   (74.94,79.18) -- (68.83,89.76) -- (56.61,89.76) -- (50.5,79.18) -- (56.61,68.59) -- (68.83,68.59) -- cycle ;
\draw  [fill={rgb, 255:red, 0; green, 255; blue, 0 }  ,fill opacity=1 ] (54.87,38.67) -- (48.76,49.25) -- (36.54,49.25) -- (30.42,38.67) -- (36.54,28.08) -- (48.76,28.08) -- cycle ;
\draw    (21.38,69.02) -- (34.61,51.64) ;
\draw [shift={(36.42,49.25)}, rotate = 127.27] [fill={rgb, 255:red, 0; green, 0; blue, 0 }  ][line width=0.08]  [draw opacity=0] (6.25,-3) -- (0,0) -- (6.25,3) -- cycle    ;
\draw    (61.88,69.02) -- (50.31,51.74) ;
\draw [shift={(48.65,49.25)}, rotate = 56.21] [fill={rgb, 255:red, 0; green, 0; blue, 0 }  ][line width=0.08]  [draw opacity=0] (6.25,-3) -- (0,0) -- (6.25,3) -- cycle    ;
\draw  [fill={rgb, 255:red, 255; green, 255; blue, 255 }  ,fill opacity=0 ] (52.45,38.67) -- (47.54,47.17) -- (37.72,47.17) -- (32.81,38.67) -- (37.72,30.16) -- (47.54,30.16) -- cycle ;
\draw  [fill={rgb, 255:red, 255; green, 255; blue, 255 }  ,fill opacity=1 ] (72.49,79.16) -- (67.58,87.66) -- (57.76,87.66) -- (52.85,79.16) -- (57.76,70.66) -- (67.58,70.66) -- cycle ;
\draw    (142.18,112.63) -- (142.18,92.23) ;
\draw [shift={(142.18,89.23)}, rotate = 90] [fill={rgb, 255:red, 0; green, 0; blue, 0 }  ][line width=0.08]  [draw opacity=0] (6.25,-3) -- (0,0) -- (6.25,3) -- cycle    ;
\draw    (182.18,112.63) -- (182.18,92.3) ;
\draw [shift={(182.18,89.3)}, rotate = 90] [fill={rgb, 255:red, 0; green, 0; blue, 0 }  ][line width=0.08]  [draw opacity=0] (6.25,-3) -- (0,0) -- (6.25,3) -- cycle    ;
\draw   (154.84,123.42) -- (148.76,133.96) -- (136.59,133.96) -- (130.5,123.42) -- (136.59,112.88) -- (148.76,112.88) -- cycle ;
\draw   (194.44,123.87) -- (188.33,134.46) -- (176.11,134.46) -- (170,123.87) -- (176.11,113.29) -- (188.33,113.29) -- cycle ;
\draw   (154.44,79.18) -- (148.33,89.76) -- (136.11,89.76) -- (130,79.18) -- (136.11,68.59) -- (148.33,68.59) -- cycle ;
\draw   (194.94,79.18) -- (188.83,89.76) -- (176.61,89.76) -- (170.5,79.18) -- (176.61,68.59) -- (188.83,68.59) -- cycle ;
\draw  [fill={rgb, 255:red, 0; green, 255; blue, 0 }  ,fill opacity=1 ] (174.87,38.67) -- (168.76,49.25) -- (156.54,49.25) -- (150.42,38.67) -- (156.54,28.08) -- (168.76,28.08) -- cycle ;
\draw    (141.38,69.02) -- (154.61,51.64) ;
\draw [shift={(156.42,49.25)}, rotate = 127.27] [fill={rgb, 255:red, 0; green, 0; blue, 0 }  ][line width=0.08]  [draw opacity=0] (6.25,-3) -- (0,0) -- (6.25,3) -- cycle    ;
\draw    (181.88,69.02) -- (170.31,51.74) ;
\draw [shift={(168.65,49.25)}, rotate = 56.21] [fill={rgb, 255:red, 0; green, 0; blue, 0 }  ][line width=0.08]  [draw opacity=0] (6.25,-3) -- (0,0) -- (6.25,3) -- cycle    ;
\draw  [fill={rgb, 255:red, 0; green, 0; blue, 0 }  ,fill opacity=0 ] (172.45,38.67) -- (167.54,47.17) -- (157.72,47.17) -- (152.81,38.67) -- (157.72,30.16) -- (167.54,30.16) -- cycle ;
\draw   (192.49,79.16) -- (187.58,87.66) -- (177.76,87.66) -- (172.85,79.16) -- (177.76,70.66) -- (187.58,70.66) -- cycle ;
\draw   (195.34,168.92) -- (189.23,179.51) -- (177.01,179.51) -- (170.9,168.92) -- (177.01,158.34) -- (189.23,158.34) -- cycle ;
\draw    (182.43,157.88) -- (182.43,137.55) ;
\draw [shift={(182.43,134.55)}, rotate = 90] [fill={rgb, 255:red, 0; green, 0; blue, 0 }  ][line width=0.08]  [draw opacity=0] (6.25,-3) -- (0,0) -- (6.25,3) -- cycle    ;
\draw    (262.18,112.63) -- (262.18,92.23) ;
\draw [shift={(262.18,89.23)}, rotate = 90] [fill={rgb, 255:red, 0; green, 0; blue, 0 }  ][line width=0.08]  [draw opacity=0] (6.25,-3) -- (0,0) -- (6.25,3) -- cycle    ;
\draw    (302.18,112.63) -- (302.18,92.3) ;
\draw [shift={(302.18,89.3)}, rotate = 90] [fill={rgb, 255:red, 0; green, 0; blue, 0 }  ][line width=0.08]  [draw opacity=0] (6.25,-3) -- (0,0) -- (6.25,3) -- cycle    ;
\draw   (274.84,123.42) -- (268.76,133.96) -- (256.59,133.96) -- (250.5,123.42) -- (256.59,112.88) -- (268.76,112.88) -- cycle ;
\draw   (314.44,123.87) -- (308.33,134.46) -- (296.11,134.46) -- (290,123.87) -- (296.11,113.29) -- (308.33,113.29) -- cycle ;
\draw   (274.44,79.18) -- (268.33,89.76) -- (256.11,89.76) -- (250,79.18) -- (256.11,68.59) -- (268.33,68.59) -- cycle ;
\draw   (314.94,79.18) -- (308.83,89.76) -- (296.61,89.76) -- (290.5,79.18) -- (296.61,68.59) -- (308.83,68.59) -- cycle ;
\draw  [fill={rgb, 255:red, 0; green, 255; blue, 0 }  ,fill opacity=1 ] (294.87,38.67) -- (288.76,49.25) -- (276.54,49.25) -- (270.42,38.67) -- (276.54,28.08) -- (288.76,28.08) -- cycle ;
\draw    (261.38,69.02) -- (274.61,51.64) ;
\draw [shift={(276.42,49.25)}, rotate = 127.27] [fill={rgb, 255:red, 0; green, 0; blue, 0 }  ][line width=0.08]  [draw opacity=0] (6.25,-3) -- (0,0) -- (6.25,3) -- cycle    ;
\draw    (301.88,69.02) -- (290.31,51.74) ;
\draw [shift={(288.65,49.25)}, rotate = 56.21] [fill={rgb, 255:red, 0; green, 0; blue, 0 }  ][line width=0.08]  [draw opacity=0] (6.25,-3) -- (0,0) -- (6.25,3) -- cycle    ;
\draw  [fill={rgb, 255:red, 0; green, 0; blue, 0 }  ,fill opacity=0 ] (292.45,38.67) -- (287.54,47.17) -- (277.72,47.17) -- (272.81,38.67) -- (277.72,30.16) -- (287.54,30.16) -- cycle ;
\draw   (312.49,79.16) -- (307.58,87.66) -- (297.76,87.66) -- (292.85,79.16) -- (297.76,70.66) -- (307.58,70.66) -- cycle ;
\draw   (315.34,168.92) -- (309.23,179.51) -- (297.01,179.51) -- (290.9,168.92) -- (297.01,158.34) -- (309.23,158.34) -- cycle ;
\draw   (272.51,123.47) -- (267.6,131.97) -- (257.78,131.97) -- (252.87,123.47) -- (257.78,114.97) -- (267.6,114.97) -- cycle ;
\draw    (302.43,157.88) -- (302.43,137.55) ;
\draw [shift={(302.43,134.55)}, rotate = 90] [fill={rgb, 255:red, 0; green, 0; blue, 0 }  ][line width=0.08]  [draw opacity=0] (6.25,-3) -- (0,0) -- (6.25,3) -- cycle    ;
\draw   (90,96) -- (106.4,96) -- (117.33,103.81) -- (106.4,111.62) -- (90,111.62) -- cycle ;
\draw   (210,96) -- (226.4,96) -- (237.33,103.81) -- (226.4,111.62) -- (210,111.62) -- cycle ;

\draw (62.18,78.83) node    {$7$};
\draw (21.38,120.35) node    {$4$};
\draw (62.18,120.45) node    {$3$};
\draw (42.65,38.7) node    {$7$};
\draw (21.38,79.2) node    {$3$};
\draw (182.43,79) node    {$7$};
\draw (142.18,123.47) node    {$4$};
\draw (182.43,123.95) node    {$3$};
\draw (162.63,38.7) node    {$7$};
\draw (142.85,79.2) node    {$3$};
\draw (182.43,170) node    {$4$};
\draw (302.43,79) node    {$7$};
\draw (262.18,123.47) node    {$7$};
\draw (302.43,123.95) node    {$3$};
\draw (282.65,38.2) node    {$7$};
\draw (262.18,79.2) node    {$3$};
\draw (296.93,161.7) node [anchor=north west][inner sep=0.75pt]    {$4$};
\draw (21.65,3.8) node [anchor=north west][inner sep=0.75pt]   [align=left] {\begin{minipage}[lt]{30.51pt}\setlength\topsep{0pt}
\begin{center}
step 1
\end{center}

\end{minipage}};
\draw (141.63,3.8) node [anchor=north west][inner sep=0.75pt]   [align=left] {\begin{minipage}[lt]{30.51pt}\setlength\topsep{0pt}
\begin{center}
step 2
\end{center}

\end{minipage}};
\draw (261.65,6.2) node [anchor=north west][inner sep=0.75pt]   [align=left] {\begin{minipage}[lt]{30.51pt}\setlength\topsep{0pt}
\begin{center}
step 3
\end{center}

\end{minipage}};

\end{tikzpicture}

}
    \caption{\small A bouncing attack presented in 3 steps. We have 10 validators, of which 3 are Byzantines. 
    The number inside each hexagon corresponds to the number of validators who made a checkpoint vote with this checkpoint as target.
    \textbf{1st step:} We start in a situation where there is a fork. A checkpoint is justified on one of the chains and a checkpoint of a higher epoch is \textit{justifiable} on the other. checkpoints are justified. We are at the end of the third epoch in which honest validators have divided their vote on each side. \textbf{2nd step:} We have reached \texttt{GST} at the beginning of the fourth epoch and 4 honest validators have already voted (rightfully so). \textbf{3rd step:} Here is the moment Byzantine validators take action and release their checkpoint vote for the concurrent chain, thus justifying the previously forsaken checkpoint and thereby changing the highest justifying checkpoint. By repeating this process, the bouncing attack can continue indefinitely.}
    \label{fig:simplifiedBoncingAttack}
\end{figure}

\begin{figure}
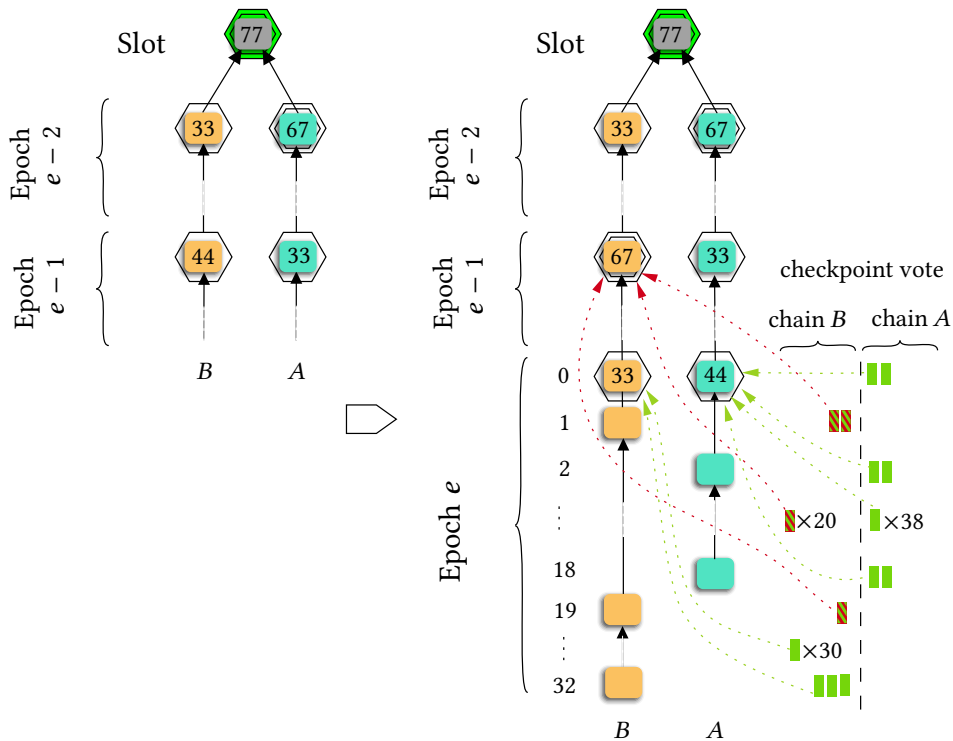

    \centering
    \resizebox{.85\linewidth}{!}{

}

    \caption{\small This figure presents a detailed version of the bouncing attack. In this example, we have a total of 100 validators, of which 23 are Byzantines.
    A block in a checkpoint corresponds to the block associated with that checkpoint.
    The number inside each hexagon (hovering a block) corresponds to the number of validators who made a checkpoint vote with this checkpoint as target. We distinguish between two sorts of checkpoint votes, the Byzantine ones, which are bi-color rectangles, 
    and the honest ones, which are uni-color rectangles. 
    We compile the 3 steps of \autoref{fig:simplifiedBoncingAttack} in 2 with more information on how justification's turning point is accomplished because of the Byzantine agents. \textbf{First step:} We begin from a situation where epoch $e-1$ just ended and we now reach \texttt{GST}. Notice that the candidate chain is chain $A$ because the checkpoint with the highest epoch is on chain $A$ but not chain $B$. \textbf{Second step:} In this step, the checkpoint vote released during epoch $e$ can change the last justified checkpoint to change the candidate chain for chain $A$ to chain $B$. Byzantine validators released their checkpoint vote from the previous epoch during epoch $e$. They send their last checkpoint vote at slot 23 once the checkpoint of epoch $e$ on chain $A$ has reached 44, thus becoming justifiable (i.e., not yet justified but with enough votes so that Byzantine validators can justify it). This trigger the candidate chain to change from chain $A$ to chain $B$ starting the \emph{bounce}.}
    \label{fig:complexBouncingAttack}
\end{figure}


\subsection{Implemented patch} \label{subsec:implementedPatch}
The explication of the patch is described for the first time on the Ethereum research forum \cite{nakamura_prevention_2019}.
The solution that was found to mitigate the bouncing attack is to engrave in the protocol the fact that validators cannot change their minds regarding justified checkpoints after a part of the epoch has passed. 

The goal of the solution proposed is to prevent the possibility of justifiable checkpoints being left out by honest validators.
To prevent honest validators from leaving a justifiable checkpoint, the patch must stop validators from changing their view of checkpoints before more than 1/3 of validators have cast their checkpoint vote.  
This condition stems from the fact that we reckon the proportion of Byzantine validators to be at most $1/3-\epsilon$. 
To apply this condition, the patch designates a number of slots after which honest validators cannot change their view of checkpoints.  
Indeed since validators are scattered equally among the different slots to cast their vote (in attestations) in a specific time frame, stopping validators from changing their view after a number of slots is equivalent to stopping changing their view after a certain proportion of validators have voted.
This does look like a solution to prevent Byzantine validators from influencing honest validators into forsaking a checkpoint now \emph{justifiable} for them. 

To enforce this behavior, called the ''fixation of view'', the protocol has a constant $j$ called \texttt{SAFE\_}\texttt{SLOTS\_}\texttt{TO}\texttt{\_UPDATE} \texttt{\_JUSTIFIED} in the code (cf. \autoref{algo:syncBlock} in \autoref{subsec:code}). 
This constant is the number of slots\footnote{At the time of writing this paper, $j=8$ \cite{github_specs}.} until when validators can change their view of the justified checkpoints. 
In the patch introducing this constant $j$, they mention a possible attack called \emph{splitting attack}. As they point out, the splitting attack relies on a ``last minute delivery" family of strategies whereby releasing a message late enough, some validators will consider it too late while others will not. This could thus split the validators into two different chains, not being able to conciliate their view before the end of the epoch. They consider the assumption of attackers being able to send a message at the right time to split honest validators too strong, we discuss in \autoref{sec:relatedWorks} whether it is the case or not. 
In \ref{subsec:probabilisticBouncingAttack}, we present a new attack inspired by the splitting attack with more realistic assumptions.

\subsection{Probabilistic Bouncing attack - why the patch is not enough}
\label{subsec:probabilisticBouncingAttack}

In this part, we present our novel attack against the protocol of Ethereum Proof-of-Stake. The attack is visually explained in \autoref{fig:probabilisticBouncingAttack}.

\begin{figure}
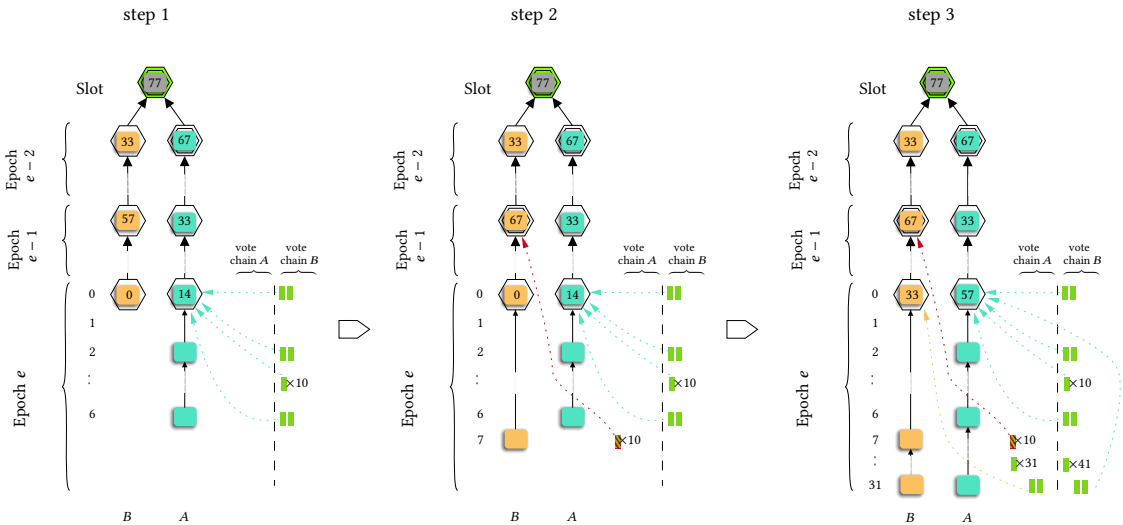

    \centering
    \resizebox{1\linewidth}{!}{



}
\caption{\small This figure presents the probabilistic bouncing attack. In this example, we consider 100 validators, of which 10 are Byzantines. A block in a checkpoint corresponds to the block associated with that checkpoint.
    The number inside each hexagon (hovering a block) corresponds to the number of validators who made a checkpoint vote with this checkpoint as target.
    The example starts at the slot before the attack in step 1. \texttt{GST} has been reach in epoch $e$ and honest validators have started to vote on chain $A$. This is the correct action because the justified checkpoint with the highest epoch is on chain $A$ (at epoch $e-2$).
    During the next slot in step 2, before reaching $limitSlot$, a Byzantine validator sends a block with withheld votes for the checkpoint at epoch $e-1$ on chain $B$.
    It is released just in time for a set of honest validators to consider it and too late for the remaining validators.
    The honest validators that see the block in time will update their view of the justified checkpoint with the highest epoch and consider chain $B$ as the candidate chain.
    We now show how the epoch continues with step 3. The block produced by a Byzantine having been released just in time,  $(1/3)$ of honest validators have changed their view. This results in a situation where Byzantine validators can perform the same attack during the next epoch provided that at least one Byzantine validator is selected to be block proposer on chain $A$ for one of the $8^{th}$ first slots.}
    \label{fig:probabilisticBouncingAttack}

\end{figure}

\subsubsection{Attack Conditions}
\label{subsubsec:attackConditions}

Our attack takes place during the synchronous period 
and uses the power of \textit{equivocation} of Byzantine processes.
Equivocation is caused by a Byzantine process that sends a message only to a subset of validators at a given point in time and potentially another message or none to another subset of validators. 
The effect is that only a part of the validators will receive the message on time. More in detail, the bounded network delay is used by a Byzantine validator to convey a message to be read on a specific slot by some validators and read on the next slot by the other validators. Note that if a protocol is not tolerant to equivocation, then it is not BFT (Byzantine-Fault Tolerant), since equivocation is the typical action possible for Byzantine validators.

\subsubsection{Attack Description and Analysis}

Let $\beta \le f/n$ be the fraction of Byzantine validators in the system. The attack setup is the following. First, as in the traditional bouncing attack, we start in a situation where the network is still partially synchronous. A fork occurs and results in the highest justified checkpoint being on chain $A$ at epoch $e$, and a justifiable checkpoint at epoch $e+1$ on chain $B$. Assume now that \texttt{GST} is reached, the attack can proceed\footnote{Note that before \texttt{GST}, no algorithm can ensure liveness since communication delays may not be bounded.} as follows:
\begin{enumerate}
    \item Since \texttt{GST} is reached, the network is fully synchronous. Chain $A$ is the candidate chain for all validators.
    \item Just before validators must stop updating their view concerning justified checkpoint (i.e., before reaching the limit of $j$ slots\footnote{At the time of writing, 8 slots.} in the epoch corresponding to the condition line 6 in \autoref{algo:syncBlock}), a Byzantine proposer proposes a block (cf. \autoref{algo:prepareBlock}) on chain $B$. 
    This block contains attestations with enough checkpoint votes to justify the justifiable checkpoint that was left by honest validators.
    The attestations included in the block are attestations of Byzantine validators that were not issued in the previous epoch when they were supposed to.
    The block must be released just in time, that is, right before the end of slot $j$, so that $(1/3-\beta)$ of the validators change their view of the candidate chain to be active on chain $B$ while the rest of honest validators continue on chain $A$. This is possible due to the patch preventing validators from changing their mind after $j$ slots.
    \item Repeat the process.
\end{enumerate}


An important aspect to consider in the attack is the probability for Byzantine validators to become proposers. This is an important part since without the role of proposer, validators are not legitimate to propose blocks and thus cannot add new attestations containing checkpoint votes on top of the concurrent chain.\footnote{Note that Byzantine validators can not use their role of proposer during the previous epoch to release a block with the right attestations because it might not be the last block of the epoch. Indeed because a part of honest validators is on the concurrent chain, they also add blocks to it. For the checkpoint votes contained in the Byzantine attestations to justify the justifiable checkpoint, it must be on the same chain as the attestation making the checkpoint justifiable in the first place.}. 
The probability of being selected to be a proposer directly impacts how long the probabilistic bouncing attack can continue. In the following theorem, we establish the probability of a probabilistic bouncing attack lasting for a specific number of epochs.

\begin{theorem}
The probabilistic bouncing attack occurs during $k$ epochs after \texttt{GST} and a favorable setting with probability:
\begin{equation}
    P(\textrm{bouncing }k \textrm{ times})=(1-\alpha^j)^k,
\end{equation}

with $\alpha \in [0,1]$ the proportion of honest validators and $j$ the number of slots before locking a choice for the justification.
\end{theorem}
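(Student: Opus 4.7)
The plan is to reduce the probability computation to a question about proposer assignments in the first $j$ slots of each epoch, and then to exploit the per-epoch independence coming from the RANDAO-based pseudo-randomness described in \autoref{ssec:pseudo-randomness}.

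First I would isolate the per-epoch condition that is necessary and sufficient for the bounce described in \autoref{subsubsec:attackConditions} to be repeated once more. Starting from the favorable setting (a justified checkpoint on chain $A$ and a justifiable checkpoint of higher epoch on chain $B$ that Byzantines can justify with withheld votes), the attacker must release the block carrying those withheld checkpoint votes on chain $B$ before the deadline fixed by \texttt{SAFE\_SLOTS\_TO\_UPDATE\_JUSTIFIED}$=j$ of \autoref{algo:syncBlock}; after that deadline honest validators no longer update their view of the highest justified checkpoint. Since only the designated proposer of a slot is entitled to broadcast a block that honest validators will treat as valid (\autoref{algo:prepareBlock}), the attack step succeeds in an epoch if and only if at least one of the proposers of slots $0,1,\dots,j-1$ of that epoch is Byzantine on chain $B$.

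Next I would compute the probability of that event for a single epoch. By the proposer-selection procedure of \autoref{algo:getProposerIndex}, which draws each slot's proposer from \texttt{computeShuffledIndex} seeded by the RANDAO output, each slot's proposer is, under the usual idealization of RANDAO as a uniform pseudo-random source, an independent draw from the validator set with probability $\beta=1-\alpha$ of being Byzantine. Consequently the probability that the proposers of all $j$ slots are honest is $\alpha^j$, and hence the probability that the attack step is feasible in a given epoch is $1-\alpha^j$.

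Finally I would iterate over epochs. Because each epoch's proposer schedule depends on the RANDAO mix of a previous epoch (\autoref{algo:getSeed}) and the seeds for different epochs are modeled as independent uniform draws, the events ``epoch $e+t$ admits a Byzantine proposer in its first $j$ slots'' are independent across $t=0,\dots,k-1$. Multiplying yields
\begin{equation}
    P(\text{bouncing } k \text{ times}) \;=\; (1-\alpha^j)^k,
\end{equation}
as required. The main delicate point — and the only place where the argument is not purely combinatorial — is the independence and uniformity of the proposer draws: I would state explicitly that we are adopting the standard idealization of RANDAO as a uniform source, since a Byzantine minority can bias RANDAO only in very limited ways, and I would also remark that taking ``at least one Byzantine in the first $j$ slots'' (rather than in a specific slot) is sufficient because the attacker can wait until the chosen slot to release the withheld block.
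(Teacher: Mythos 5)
Your proposal is correct and follows essentially the same route as the paper's proof: reduce each bounce to the event that at least one of the first $j$ proposers on the concurrent chain is Byzantine, compute this as $1-\alpha^j$ per epoch, and raise to the power $k$ across epochs. The only difference is that you make explicit the uniformity and independence idealizations of the RANDAO-based proposer selection, which the paper's proof uses implicitly; this is a welcome clarification rather than a different argument.
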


\begin{proof}
We denote by $\alpha$ the proportion of honest validators and $j$ the number of slots before locking
the choice for justification.
We want to know the probability of delaying the finality for $k$ epochs. 
Once we assume a setup condition sufficient to start a probabilistic bouncing attack, the attack continues until it becomes impossible for Byzantine validators to cast a vote to justify the justifiable checkpoint. 
To cast their vote, Byzantine validators need one of the $j$ first slots of the concurrent 
chain to have a Byzantine validator as proposer. Considering the probability of choosing between each validator, the chance for a Byzantine validator to be a proposer for one of the first $j$ slots is $(1-\alpha^j)$; $\alpha$ being the proportion of honest validators. For $k$ epochs, we take this result to the power of $k$.
\end{proof}

As we can see, the probability of the bouncing attack to continue for $k$ epochs depends on two factors: $\alpha$ the proportions of honest validators which cannot be controlled and $j$ the number of slots before which validators are allowed to switch branches. 
Reducing $j$ to 0 would prevent the bouncing attack from happening (the probability falls to 0), but it would mean that validators are never allowed to change their view of the candidate chain. 
This naive solution would allow irreconcilable choices between the set of validators and prevent any new checkpoint from being justified, which is a more severe threat to the liveness of Ethereum PoS.

Reducing the number of slots where validators can change their view of the blockchain implies that different views cannot reconcile quickly. Or at least, the window of opportunity for doing so gets smaller. 
In theory, the proportion of Byzantine validators necessary to perform this attack is $1/n$. This is because we assume a favorable setup and that Byzantine validators can send messages so that only a wished portion of honest validators receives it on time.
Our analysis focuses on the course of action of the attackers during the attack rather than the course of action necessary for it to appear.


\section{Safety} \label{sec:safety}

In order to prove the safety of the protocol, we begin by giving lemmas concerning the justification of checkpoints. The first lemma rules out the possibility of two different justified checkpoints having the same epoch. New validators that wants to join the set of validators must send the amount they wish to stake at a specific smart contract\footnote{Currently 32 ETH is needed to become a validator.}. This transaction triggers the process for a validator to join the set of validators. The last step needed for the activation of a validator (allowing it to send attestations and propose blocks) requires that the block adding the validator to the validator set gets finalized\footnote{The exact process requires the placement of the validator in the activation queue to be finalized \href{https://github.com/ethereum/consensus-specs/blob/80ba16283c9447db8aa04eeaf4a3940b56480758/specs/phase0/beacon-chain.md}{https://github.com/ethereum/consensus-specs/blob/dev/specs/phase0/beacon-chain.md}.}.
This means that between two finalized checkpoints, the set of validators is fixed.

\begin{lemma} \label{firstLemma}
If checkpoints $C$ and $C'$ both of epoch $e$ are justified, it must necessarily be that $C = C'$.
\end{lemma}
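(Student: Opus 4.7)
The plan is to argue by contradiction using a standard quorum intersection argument, exploiting the fact that an honest validator casts exactly one checkpoint vote per epoch.

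First I would suppose toward contradiction that $C \neq C'$ are both justified and both have epoch $e$. By the definition of justification (Section~\ref{paragraph:Justification}), there exist supermajority links $(a_1, e_1) \xrightarrow{\texttt{J}} C$ and $(a_2, e_2) \xrightarrow{\texttt{J}} C'$. Let $V_C$ and $V_{C'}$ denote the sets of validators whose checkpoint votes form these two supermajority links. By the supermajority condition, the stake represented in each of $V_C$ and $V_{C'}$ exceeds $\tfrac{2}{3}$ of the total stake. Here I implicitly rely on the preceding paragraph: since $C$ and $C'$ share epoch $e$ and the validator set only changes across finalization boundaries, both votes are cast by validators drawn from the same validator set with the same stake distribution.

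Next I would apply a quorum intersection argument. Since both $V_C$ and $V_{C'}$ control more than $\tfrac{2}{3}$ of the stake, the stake controlled by $V_C \cap V_{C'}$ exceeds $\tfrac{1}{3}$ of the total by inclusion--exclusion. Because the fault model assumes $f < n/3$ Byzantine validators (in stake terms), the intersection $V_C \cap V_{C'}$ must contain at least one honest validator $h$.

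Finally I would derive the contradiction from honest behavior. According to \autoref{algo:prepareAttestation}, in each epoch an honest validator broadcasts exactly one attestation, which carries exactly one checkpoint vote, whose target is the current checkpoint of that epoch. Hence $h$ could have produced at most one checkpoint vote with a target at epoch $e$, contradicting $h \in V_C \cap V_{C'}$ with $C \neq C'$. Therefore $C = C'$. The only subtle step I anticipate as an obstacle is justifying that the stake distribution used to evaluate both supermajority links is the same; this is handled by the remark preceding the lemma that the validator set is fixed between finalizations, so in particular it is fixed across all justified checkpoints of a common epoch $e$.
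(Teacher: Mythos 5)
Your proposal is correct and follows essentially the same quorum-intersection argument as the paper: two supermajority links for the same epoch force an overlap exceeding the Byzantine budget $f<n/3$, yielding an honest validator who would have had to cast two distinct checkpoint votes in epoch $e$, contradicting the one-attestation-per-epoch rule. The only cosmetic difference is that the paper intersects the honest subsets of the two quorums (each of size at least $2n/3-f$) while you intersect the full quorums and then extract an honest member; both are standard and equivalent.
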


\begin{proof}
By hypothesis, we know that Byzantine validators are at most $f<n/3$. For the sake of contradiction, let us assume that $C$ and $C'$ are different checkpoints.
Let $V$ be the set of at least $2n/3-f$ honest validators that cast a checkpoint vote for checkpoint $C$ in epoch $e$, and $V'$ be the set of at least $2n/3-f$ honest validators that cast a checkpoint vote for checkpoint $C'$ in epoch $e$. The intersection of the two sets of honest validators is $|V \cap V'| \geq (2n/3 - f) + (2n/3 - f) - (n - f) = (n/3 - f) > 0$. $|V \cap V'| > 0$ implies that at least one honest validator voted both for checkpoint $C$ and $C'$ in epoch e. This is a contradiction since, according to the protocol
specification\footnote{This is specified in the specs \url{https://github.com/ethereum/consensus-specs/blob/dev/specs/phase0/validator.md\#attester-slashing}, and implemented on actual client Prysm \url{https://github.com/prysmaticlabs/prysm/blob/0fd52539153e32cfbd0a27ee51f253f8f6bb71c4/validator/client/attest.go\#L140}. This corresponds to the only attestation done by an honest validator during an epoch, see \autoref{algo:sync}.}, 
an honest process signs at most one unique block per epoch, therefore $C=C'$. This proves there cannot be more than one justified checkpoint by epoch.
\end{proof}

The following lemma explains why the finalization of a checkpoint necessarily means that a checkpoint cannot be justified on a different chain afterward.

\begin{lemma}\label{secondLemma}
If a checkpoint $C$ of epoch $e$ is finalized on chain $c$, and a checkpoint $C'$ of epoch $e'$ is justified on chain $c'$ with $e'>e$, it necessarily means that $c$ and $c'$ have a common prefix until epoch $e$. 
\end{lemma}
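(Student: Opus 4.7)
The plan is a proof by contradiction. I would assume $C$ is finalized on chain $c$ at epoch $e$ while some justified checkpoint at epoch $> e$ lies on a chain $c' \neq c$ (meaning the common prefix of $c$ and $c'$ ends strictly before epoch $e$), and among all such offenders I would pick $C' = (b',e')$ of minimum epoch $e'$. By minimality every justified checkpoint with epoch in $(e,e')$ must lie on $c$. Since $C'$ is justified there is a supermajority link $C'' \xrightarrow{\texttt{J}} C'$ with $e'' < e'$; because a valid checkpoint vote has its target as a descendant of its source and more than $2n/3$ of the stake supports the link, $C'$ is a descendant of $C''$, placing $C''$ on $c'$.

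I would then split on the position of $e''$ relative to $e$. If $e < e'' < e'$, minimality of $e'$ forces $C''$ onto $c$, contradicting $C'' \in c'$. If $e'' = e$, Lemma~\ref{firstLemma} gives $C'' = C$, whence $C'$ as a descendant of $C$ lies on $c$, contradiction. The decisive remaining case is $e'' < e$. Here I would invoke the finalization of $C$, which supplies a supermajority link $C \xrightarrow{\texttt{J}} D$ with $D$ on $c$ and $e_D \in \{e+1, e+2\}$, together with, in the $e_D = e+2$ scenario, a justified epoch-$(e+1)$ checkpoint $B$ on $c$. If $e < e' \leq e_D$, then $e' \in \{e+1, e+2\}$ and a direct application of Lemma~\ref{firstLemma} to $C'$ against either $D$ (when $e' = e_D$) or $B$ (when $e' = e+1$ and $e_D = e+2$) identifies them and places $C'$ on $c$, contradiction.

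This leaves the principal case $e'' < e < e_D < e'$, where the link $C'' \to C'$ strictly surrounds $C \to D$. By a quorum-intersection argument identical in spirit to that of Lemma~\ref{firstLemma}, at least $n/3 - f > 0$ honest validators cast both checkpoint votes. The main obstacle, and the only non-routine step, is ruling this out. I would verify from Algorithm~\ref{algo:prepareAttestation} that an honest validator's checkpoint vote always has source $lastJustifiedCheckpoint_p$ and target the current epoch's checkpoint, together with the fact that $lastJustifiedCheckpoint_p$ only moves to strictly higher epochs. Consequently the source epochs of an honest validator's successive votes are non-decreasing while the target epochs strictly increase, so no honest validator can emit a strict surround; this contradiction closes the proof.
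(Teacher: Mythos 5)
Your proof is correct and rests on the same engine as the paper's: intersect the quorum of honest validators who voted with source $C$ (supplied by the finalization of $C$) with the quorum who voted for the link justifying $C'$, and derive a contradiction from the fact that an honest validator's checkpoint-vote sources have non-decreasing epochs while its targets strictly increase. The one genuine difference is that you take a minimal offender $C'$ and case-split on the epoch $e''$ of the source of its justifying link; this makes explicit a step the paper leaves implicit (namely, that the link justifying $C'$ can be assumed to have a source of epoch at most $e-1$, or else to collide with $C$, $B$, or $D$ via Lemma~\ref{firstLemma}), so your write-up is, if anything, more complete than the published argument.
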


\begin{proof}
$C'$ being justified on chain $c'$, it means that at least $2n/3-f$ honest validators must have cast a checkpoint vote with $C'$ as checkpoint target for epoch $e'$.

For the sake of contradiction, let us say that $c$ and $c'$ have a common prefix until epoch $e-1$ at most. For a checkpoint to be justified on chain $c'$ at an epoch strictly superior to $e$ it implies that a set $V'$ of at least $2n/3-f$ honest validators must have cast a checkpoint vote\footnote{See \autoref{paragraph:Justification} for details on justification.} with a checkpoint target on chain $c'$ and a checkpoint source with epoch inferior to $e-1$.

Checkpoint $C$ of epoch $e$ being finalized on chain $c$, we have two possibilities\footnote{See \autoref{paragraph:Finalization} for details on finalization.}. Either the checkpoint at epoch $e+1$ on chain $c$ has been justified with checkpoint $C$ as source. Or the checkpoint at epoch $e+2$ on chain $c$ has been justified with checkpoint $C$ as source and the checkpoint at epoch $e+1$ is justified. Either way, a justification occurred on chain $c$ with checkpoint $C$ as source, and no justification occurred on a different chain before its finalization.

Hence, we know that a set $V$ of at least $2n/3-f$ honest validators have cast a checkpoint vote with $C$ as checkpoint source before a justification on any other chain.

Seeing that $|V\cap V'|>0$, at least one honest validator has cast a checkpoint vote with $C$ as checkpoint source and then a checkpoint vote with a checkpoint source of at most epoch $e-1$ and a target on chain $c'$.

That means that at least one honest validator has cast a checkpoint vote with checkpoint source with epoch inferior to $e-1$ after seeing checkpoint $C$ at epoch $e$ justified. However, the fork choice rule of the protocol (cf. \autoref{algo:GHOST}) requires honest validators to vote on the chain with the highest justified checkpoint.
This contradiction proves the lemma. 
\end{proof}

We saw with Lemma \ref{firstLemma} that two checkpoints of the same epoch could not be justified, hence finalized. We then showed with Lemma \ref{secondLemma} that after a finalization on one chain, no checkpoints could become justified on any other chain. These are the conditions required to have safety, as we prove now.

\begin{theorem}[Safety]
There cannot be two finalized checkpoints on different chains.
\end{theorem}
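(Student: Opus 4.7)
The plan is to prove the theorem by contradiction, leveraging the two lemmas just established. Suppose, for the sake of contradiction, that there exist two finalized checkpoints $C$ and $C'$ on different chains $c$ and $c'$ respectively, where $C$ has epoch $e$ and $C'$ has epoch $e'$. Without loss of generality, assume $e \leq e'$. The proof then splits into two cases depending on whether the epochs coincide.

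In the first case, $e = e'$. Since both checkpoints are finalized, they are in particular justified, so Lemma~\ref{firstLemma} forces $C = C'$. Because the BlockTree is a tree in which each block has a unique parent, the chain from the genesis block up to $C$ is uniquely determined, so $c$ and $c'$ agree on the entire prefix ending at $C = C'$. This contradicts the assumption that $C$ and $C'$ lie on different chains, since one finalized chain is then trivially a prefix of the other.

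In the second case, $e < e'$. Here $C$ is finalized on $c$ at epoch $e$, and $C'$ is finalized, hence justified, on $c'$ at epoch $e' > e$. Applying Lemma~\ref{secondLemma} directly, $c$ and $c'$ must share a common prefix up to epoch $e$. In particular, the block associated with $C$ lies on $c'$ as well. Combined again with the unique-parent property of the BlockTree, this means that the finalized chain ending at $C$ is a prefix of the chain of $c'$ that extends to $C'$, contradicting the assumption that the two finalized checkpoints lie on incompatible chains.

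The main conceptual obstacle is making precise what is meant by ``different chains'' in the theorem statement and matching it to the Safety definition introduced earlier: the contradiction in each case is not that the block-trees of the two honest validators differ, but that neither finalized chain is a prefix of the other. Once this interpretation is fixed, both cases reduce to direct applications of the two preceding lemmas together with the elementary fact that, in the BlockTree, the chain from any block back to the genesis block is uniquely determined.
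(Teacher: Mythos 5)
Your proof is correct and follows essentially the same route as the paper's: the case $e=e'$ is dispatched by Lemma~\ref{firstLemma} and the case $e<e'$ by Lemma~\ref{secondLemma}, exactly as in the paper. The only difference is that you spell out more explicitly how the unique-parent structure of the BlockTree turns the lemmas' conclusions into the prefix relation required by the Safety definition, which is a welcome clarification but not a different argument.
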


\begin{proof}
Thanks to Lemma \ref{firstLemma}, we know that two different checkpoints $C$ and $C'$ of the same epoch cannot be justified, hence finalized. 

For the sake of contradiction, let us assume that two checkpoints $C$ and $C'$ are finalized on different chains $c$ and $c'$ at epoch $e$ and $e'$, respectively. We assume without loss of generality that $e<e'$. 
$C$ being finalized, we know thanks to Lemma \ref{secondLemma} that $C'$ cannot be justified on a different chain $c'$, let alone be finalized.
\end{proof}

The blockchain preserves the property of safety at all times. The Ethereum PoS is safe. 

\section{Related Works}
\label{sec:relatedWorks}

Blockchain protocol analyses can be divided into two main categories: those that specify or formalize protocols and those that identify vulnerabilities of the protocols. Our work is at the junction of the two categories because we
both formalize the protocol to permit its analysis, and we present
a novel attack. 

The category of specification and formalization includes the ``white papers''  (e.g., Bitcoin \cite{nakamoto_peer_2008}, Ethereum \cite{wood_ethereum_2014}). It also includes academic papers providing formal specifications and demonstrating  properties guaranteed by the protocols. For example, \cite{garay_bitcoin_2015} formally describes and analyses the Bitcoin protocol, proving its security guarantees, \cite{amoussou_dissecting_2019} does the same for the Tendermint's protocol (the consensus protocol of the Cosmos blockchain\cite{buchman_latest_2018}). Our work lies in this category of formalization by proposing a specification of  Ethereum Proof-of-Stake protocol's properties and a high-level description through pseudo-code. For what concerns protocol formalization of Ethereum Proof-of-Stake, \cite{buterin_combining_2020} is the first to propose draft specification of the Ethereum PoS protocol and related properties. However, that specification is outdated and not complete. Our paper provides a high-level formalization of the consensus mechanisms with respect to the current implemented code, along with a novel specification of its properties. 


A famous example of papers identifying vulnerabilities is \cite{eyal_majority_2018}, which presents the selfish mining attack on Bitcoin. \cite{eyal_majority_2018} shows that in Bitcoin (and proof-of-work in general), miners can benefit from deviating from the prescribed protocol by withholding blocks for a while at the expense of honest miners.
\cite{amoussou_correctness_2018} points out a liveness vulnerability of the Tendermint protocol. \cite{neuder_defending_2020} presents an attack where nodes can reorganize the Tezos' Emmy+ chain and then do a double-spend attack. 
Our work follows the line of research focusing on flaws  of Ethereum Proof-of-Stake  \cite{neu_ebb_2021,neu_two_2022,schwarz_three_2021}. Neu et al. \cite{neu_ebb_2021} exhibit a balancing attack, highlighting the shortcomings of a consensus mechanism being separated into two layers (finality gadget, fork choice rule). Mitigation against this attack was proposed, but \cite{neu_two_2022} overcame this mitigation with a new balancing attack. \cite{schwarz_three_2021} presented reorg attacks revealing that validators with the role of proposer could gain from disturbing the protocol by releasing their block late. 
Our paper presents another flaw regarding the liveness of the current Ethereum Proof-of-stake protocol, on which some attacks do not seem feasible anymore, and thus insists on the importance of finding new ways to conciliate availability and finality. We differ from \cite{galletta_resilience_2022} that aims at formally verifying the protocol of \textit{Hybrid Capser} focusing on an outdated version of the protocol. While we formalize through pseudo-code the current implemented version of the protocol and exhibit a liveness attack on the protocol.

Nakamura \cite{nakamura_prevention_2019}  presents an attack called \textit{splitting attack} in which the adversary sends messages to split the set of validators. However, to achieve this attack, Nakamura assumes that the adversary needs to control and play with network delays. This is a strong assumption and can be considered unrealistic. More recently, \cite{schwarz_three_2021} showed through experiments that attackers can predict the proportion of validators receiving a given message within a specific time frame with sufficient accuracy. This contradicts Nakamura's claim that the attack necessitates the adversary to control the network delay. 
In this work, we present a form of the splitting attack based on the weaker assumption that the adversary knows the network delay (in line with \cite{schwarz_three_2021}) but does not control it. Moreover, our attack is repeated, hence the name bouncing, being a threat to the liveness.

Outside of these two categories lies works that provide formal ground for blockchains. \cite{anceaume_finality_2021} describes the types of finality a blockchain can achieve, \cite{anceaume_abstract_2018} proposes a formalization  of blockchains and their evolutions as BlockTrees. We rely on the definition of BlockTree and finality to express the Ethereum protocol properties.

\section{Categorization of Attacks on Ethereum Proof-of-Stake}
\label{sec:categorizationAttacks}

Ethereum Proof-of-Stake has a history of attacks targeting the liveness of the protocol. These attacks can be divided into two main groups: the one targeting the fork choice rule -\emph{Partitioning over the candidate chains}-, and the one targeting the justification of checkpoints -\emph{Swinging over the justified chains}. The former aims at dividing the set of honest validators equally on two concurrent chains. The latter aims at preventing finalization by alternatively justifying on two concurrent chains.

This categorization is summarized in \autoref{table:attacksCategorization}. The controllable setup indicates the ease for Byzantine validators to start the attack. A controllable setup is one that Byzantines can perform during the synchronous period without requiring a prior state in the asynchronous period. The Resulting Effect expresses the state of the blockchain the attack achieves on  Liveness. The Detectable columun indicates if honest validators can detect Byzantine validators that mounted the attack. 

\paragraph{Partitioning over the candidate chains.} The first attack in this category is the Balancing attack \cite{neu_ebb_2021}. As for all the Partition attacks, the Balancing attack's purpose is to keep half of the honest validators on one chain and half on another. This fork is then balanced using the Byzantines' votes to ensure that honest validators are kept in check. The execution of the attack is the following. The first proposer of the epoch needs to be Byzantine and proposes 2 blocks. The 2 blocks are released to different parts of the networks to ensure that honest validators disagree on which block was first and thus the one belonging to the candidate chain. Byzantine validators then use their vote to keep the number of validators on each chain balanced.
This attack is detectable since the first proposer needs to propose 2 different blocks. It is the only one among the Byzantine validators that performs a visibly reproachable action.

Conversely to other attack that aims at partitioning the validators over two candidates, Refined Balancing attack \cite{schwarz_three_2021} don't need a Byzantine validator to perform a detectable action. To launch the attack, the Byzantine validators wait for an opportune epoch in which they are elected to propose a block for the two first slots.  
The Refined Balancing attack is a Balancing attack without the assumption of adversarial network delay. In their model, the adversarial network delay is the capacity for an adversary to arbitrarily delay messages sent in the network, bounded by the network delay. They manage to perform a Balancing attack without an adversarial network delay by gaining knowledge about the network delay by having Byzantine validators scattered in the network. To mitigate this attack and the previous Balance attack, a 'proposer boost score' was suggested and implemented\footnote{The pull request to add this mitigation can be found here: \href{https://github.com/ethereum/consensus-specs/pull/2730}{https://github.com/ethereum/consensus-specs/pull/2730}.}. This mitigation did not actually stop the attack as shown by the LMD-Specific attack.

Another Partition attack is the LMD-Specific Balancing attack \cite{neu_two_2022}. This attack works similarly to other balancing attacks; Byzantine validators create two concurrent chains, and they keep the two chains balanced (half honest validator on each). The main difference with other balancing attacks is that instead of only having a noticeable Byzantine block proposer, a handful of Byzantine validators release votes for both chains using equivocation. This creates two sets of honest validators with different candidate chains depending on which vote they have seen first. This attack is effective even after the patch proposed to stop the Balancing Attack.

\paragraph{Swinging over the justified chains.} This category of attacks starts with the Bouncing attack \cite{nakamura_analysis_2019}. This attack first outlines a limitation of the Casper FFG protocol, the underlying protocol used to finalize blocks. This limitation resides in that honest validators have to cast checkpoint votes for checkpoints stemming from the last justified checkpoint (having the last justified checkpoint as an ancestor). Using this restriction, Byzantine validators can split the set of validators by ensuring that they don't have the same view of the last justified checkpoint when they make their checkpoint vote. This restriction to only vote for checkpoints (and block) stemming from the justified checkpoints with the highest epoch set by the fork choice rule is used for all justification attacks. We presented the patch (cf. \autoref{subsec:implementedPatch}) that has been implement to prevent this attack.

The splitting attack \cite{nakamura_prevention_2019} presents an attack in which Byzantine validators with 'strong control over the network' can split honest validators into two sets. They can do so by sending attestations late enough for some validators to receive them on time while the rest don't. This attack can lead to a situation where a checkpoint gets justifiable but not justified. This means that the set of validators is divided between two checkpoints they try justify.
The attack's type is thus a partition over the justified chains.

Our attack differentiates from the splitting attack from two main points. We do not assume Byzantine validators to have control over the network but rather they possess the usual power of equivocation. We also repeat the process of dividing the set of validators on two different chains at each epoch while expressing the probability of the attack to continue. This makes this attack's type fit the swinging on justify chains.

All the attacks to justification chains do not have a controllable setup and share the feature of not being detectable. 
 These attacks rely mainly on messages being viewed at a certain point in time for some set of honest validators and at another time for the rest. There is no need for Byzantine validators to duplicate votes; they just have to withhold their messages/votes and release at the appropriate time.

\begin{table}
\caption{Summary of attacks against Ethereum's liveness}
\label{table:attacksCategorization}
\centering
\resizebox{0.9\linewidth}{!}{%
\begin{tabular}{|>{\hspace{0pt}}m{0.2\linewidth}|c|>{\hspace{0pt}}m{0.206\linewidth}|>{\hspace{0pt}}m{0.221\linewidth}|c|}
\hline
Attack original name & Controllable Setup & Attack's type & Resulting liveness & Detectable \\

\hline

"Balancing Attack" \cite{neu_ebb_2021} & \cmark & Partitioning over the candidate chains & Stopped & \cmark \\

\hline

"Refined Balancing Attack" (Balancing attack without adversarial network delay) \cite{schwarz_three_2021} & \cmark & Partitioning over candidate chains  & Stopped & \xmark \\

\hline

"LMD-Specific Balancing Attack" \cite{neu_two_2022} & \cmark & Partitioning over candidate chains & Stopped & \cmark \\

\hline

"Bouncing Attack" \cite{nakamura_analysis_2019} & \xmark & Swinging on justified chains & Stopped & \xmark \\

\hline

"Splitting Attack" \cite{nakamura_prevention_2019} & \xmark & Partitioning over the justified chains  & Delayed for 1 epoch & \xmark \\

\hline

Probabilistic Bouncing Attack (this work) & \xmark & Swinging on justified chains & Delayed for $k$ epochs with decreasing probability & \xmark \\

\hline

\end{tabular}
}
\end{table}

\section{Conclusion}
\label{sec:conclusion}
We described a framework for 
a high-level description of the protocol.
This is the first step in providing a formalization for verification tools.
We proposed a novel distinction between the definition of liveness and availability. This distinction is crucial to pinpoint the difference between Nakamoto-style and BFT consensus. It makes possible a comparison between the two. We outlined an attack against the liveness of the protocol, showing probabilistic liveness of the protocol under this attack and we prove safety of the protocol. 
Another aspect is that our analysis did not consider the protocol's rewards and incentives. We leave this part, as well as analysing rational behavior in the protocol, as future work. 

\bibliographystyle{ACM-Reference-Format}
\bibliography{main} 

\end{document}